\documentclass[11pt]{article}

\usepackage[margin=1in]{geometry}
\usepackage{amsmath,amssymb,amsthm,amsfonts}
\usepackage{bm}
\usepackage{mathtools}
\usepackage{physics}
\usepackage{graphicx}
\usepackage{placeins}
\usepackage{microtype}
\usepackage{hyperref}
\usepackage[nameinlink,capitalise]{cleveref}

\allowdisplaybreaks[2]

\graphicspath{{figures/}}

\hypersetup{
  colorlinks=true,
  linkcolor=blue,
  citecolor=blue,
  urlcolor=blue,
  pdftitle={Newton's Off-Center Orbits in Hyperbolic Geometry: Hidden SO(2,1) Symmetry, Inversion Duality, and Magnetic Trichotomy},
  pdfauthor={Dipesh Bhandari}
}

\title{Hyperbolic Completion of Newton's Off-Center Orbit Problem: $SO(2,1)$ Symmetry, Inversion Duality, and Magnetic Classification}
\author{
Dipesh Bhandari\thanks{Corresponding author:
\href{mailto:dbhandari@smu.edu}{dbhandari@smu.edu}}\\
Department of Physics, Southern Methodist University\\
Dallas, Texas 75275, USA
}
\date{}

\newtheorem{theorem}{Theorem}[section]
\newtheorem{proposition}[theorem]{Proposition}
\newtheorem{corollary}[theorem]{Corollary}

\theoremstyle{remark}
\newtheorem{remark}[theorem]{Remark}

\newcommand{\RR}{\mathbb R}
\newcommand{\HH}{\mathbb H}
\newcommand{\eez}{\bm e_z}
\newcommand{\rr}{\bm r}
\newcommand{\pp}{\bm p}
\newcommand{\bPi}{\bm\Pi}
\newcommand{\crossz}[1]{\eez\times #1}

\begin{document}

\maketitle

\begin{abstract}
Which central forces produce circular trajectories whose geometric center differs from the force center? We solve the hyperbolic version of this problem for
$$
V(r)=-\frac{\alpha}{(R^2-r^2)^2},\qquad \alpha>0,
$$
whose singular circle $r=R$ separates the configuration space into two components. At zero energy, the Jacobi metric is proportional to the Poincaré disk metric. Hence every nonradial orbit is an arc of a Euclidean circle orthogonal to $r=R$, while radial orbits lie on lines through the origin.

We construct a Runge--Lenz-type vector which, together with angular momentum, defines an on-shell $\mathfrak{so}(2,1)$ moment map. Circular inversion preserves this structure and relates the exterior and punctured-interior flows up to time reparametrization. Although $(r=R)$ is infinitely distant in the Jacobi metric, it is reached in finite Newtonian time.

A magnetic deformation corresponds to a constant intrinsic field on the hyperbolic plane and yields an exact circle--horocycle--hypercycle transition at $Q^2=8m\alpha R^2$, with inversion acting as the charge-reversing duality $Q\leftrightarrow -Q$. We also relate the hyperbolic continuum threshold to the Hardy threshold of an inverse-square boundary model.

\end{abstract}

\medskip
\noindent\textbf{Keywords:}
off-center orbits; hyperbolic geometry; integrable Hamiltonian systems;
dynamical symmetry; circular inversion; magnetic flows; singular potentials.

\medskip
\noindent\textbf{2020 Mathematics Subject Classification:}
Primary 37J35, 70H06;
Secondary 37J37, 53C22, 81Q10.
\section{Introduction}

\subsection*{The off-center-orbit question}

A central force singles out one distinguished point: the force center.  A
circular trajectory, however, has its own geometric center.  Newton's
off-center-orbit problem asks when these two points can differ and what the
resulting displacement reveals about the force law.  This is a small inverse
problem with an unusually rich answer: the shape of one family of trajectories
can expose a hidden constant-curvature geometry and a non-Euclidean dynamical
symmetry.

The spherical branch of this story is understood.  Olshanii found a central
potential for which every zero-energy orbit is an off-center circle enclosing
the force center, and showed that the corresponding Jacobi metric is
spherical \cite{Olshanii}.  A magnetic deformation of that system was later
related, by inverse stereographic projection, to charged motion in a monopole
field on the sphere \cite{BhandariCrescimanno}.  Olshanii also identified the
complementary singular potential expected to generate the hyperbolic branch,
but the geometric observation by itself leaves several dynamical questions
unanswered.

The system studied here is
\begin{equation}
H(\rr,\pp)
 = \frac{\pp^2}{2m}
   -\frac{\alpha}{(R^2-r^2)^2},
\qquad r=|\rr|,
\label{eq:H}
\end{equation}
with $\alpha>0$.  Its singular circle $r=R$ divides the configuration space
into
\begin{equation}
D_R=\{\rr\in\RR^2:r<R\},
\qquad
E_R=\{\rr\in\RR^2:r>R\}.
\end{equation}
The Newtonian flow is defined separately on these two components.  A complete
solution must therefore do more than recognize Poincar\'e geodesics in the
interior: it must classify both components, identify the conserved structure,
explain the role of inversion, and distinguish the Newtonian singularity from
the ideal boundary of the hyperbolic metric.

\subsection*{Why the problem is interesting}

The model is useful because it brings several standard ideas into one exact
and visually transparent example.  Maupertuis--Jacobi theory converts a
fixed-energy mechanical problem into geodesic motion.  Hyperbolic isometries
then predict a noncompact $SO(2,1)$ symmetry.  Circular inversion relates the
otherwise disconnected interior and exterior regions.  A radial magnetic
field becomes a constant field with respect to hyperbolic area, so the same
Newtonian system realizes the hyperbolic Landau problem.  At the quantum
level, the conformal factor exposes both a coupling-constant transform and an
inverse-square boundary threshold.

These links matter for integrable systems because they turn geometric
intuition into explicit phase-space invariants; for singular mechanics because
they separate metric completeness from finite-time Newtonian collision; and
for magnetic dynamics because they give a direct mechanical realization of
the circle--horocycle--hypercycle trichotomy.  The paper is therefore not
claiming that Poincar\'e geodesics or hyperbolic Landau orbits are new.  Its
purpose is to prove that one singular off-center Newtonian model realizes all
of these structures exactly and to provide the complete dictionary between
them.

\subsection*{Questions and main results}

We address four concrete questions.
\begin{enumerate}
 \item What are all zero-energy trajectories in both $D_R$ and $E_R$, including
 radial motion?
 \item Which phase-space symmetry explains the orbit geometry, and how does
 circular inversion act on the full Hamiltonian flow?
 \item What does the singular circle mean in Newtonian time, Jacobi distance,
 and the associated quantum boundary problem?
 \item Which magnetic deformation preserves the hidden symmetry, and how does
 its Casimir organize the magnetic trajectories?
\end{enumerate}

The answer is summarized by the following dictionary.
\begin{center}
\begin{tabular}{p{0.38\linewidth}p{0.50\linewidth}}
\hline
Newtonian object & Hyperbolic meaning \\
\hline
zero-energy trajectory & Poincar\'e geodesic \\
$(L_z,\bm K)$ & $\mathfrak{so}(2,1)$ moment map \\
quadratic invariant & hyperbolic geodesic Hamiltonian \\
circular inversion & exterior--punctured-interior dynamical duality \\
singular circle $r=R$ & ideal boundary in Jacobi geometry, finite-time Newtonian collision \\
radial magnetic field & constant hyperbolic Landau field \\
shifted magnetic Casimir & circle--horocycle--hypercycle classifier \\
\hline
\end{tabular}
\end{center}

The principal classical result is the following.

\begin{theorem}[Classification of the zero-energy trajectories]
\label{thm:main-classification}
For the Hamiltonian \eqref{eq:H}, every nonradial zero-energy trajectory,
restricted to either $D_R$ or $E_R$, is a connected arc of a Euclidean circle
orthogonal to the singular circle $r=R$.  If $L_z\neq0$, its supporting circle
has equation
\begin{equation}
 r^2-2\bm a\cdot\rr+R^2=0,
 \qquad
 \bm a=\frac{\bm K}{2L_z},
\label{eq:main-orbit}
\end{equation}
where $\bm K$ is the conserved Runge--Lenz-type vector defined below.  Its
center and radius satisfy
\begin{equation}
 \rr_c=\bm a,
 \qquad
 \rho^2=|\bm a|^2-R^2,
 \qquad
 |\rr_c|^2=\rho^2+R^2.
\label{eq:center-radius-summary}
\end{equation}
Consequently $|\rr_c|>\rho$, so the force center $\rr=0$ lies strictly outside
the supporting circle.  The trajectories with $L_z=0$ lie on lines through
the origin.  Their interior branches are diameter geodesics of the Poincar\'e
disk, whereas their exterior branches are radial rays on the same supporting
lines.
\end{theorem}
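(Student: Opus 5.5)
The plan is to reduce everything to the Killing symmetries of the Jacobi metric and then read the orbit equation off a single scalar identity. At zero energy the kinetic energy equals $\alpha/(R^2-r^2)^2$, so the Jacobi metric is
\[
ds_J^2=2m\,(0-V)\,|d\rr|^2=\frac{2m\alpha\,|d\rr|^2}{(R^2-r^2)^2}.
\]
This is a constant multiple of the Poincar\'e metric $4R^2|d\rr|^2/(R^2-r^2)^2$ on $D_R$. The same formula on $E_R$ is carried to it by the inversion $\rr\mapsto R^2\rr/r^2$, which is an isometry of $|d\rr|^2/(R^2-r^2)^2$; I would check this directly from $|d(R^2\rr/r^2)|=R^2|d\rr|/r^2$.

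Rather than quoting the geodesic classification of the disk, I would make the argument self-contained and phase-space explicit via conserved quantities. The conformal Killing fields of the plane combine into the three Killing fields of this metric: the rotation $\crossz{\rr}$, and for each unit vector $\bm e$ the field $\bm\xi_{\bm e}=(R^2+r^2)\bm e-2(\bm e\cdot\rr)\rr$. The associated momentum maps $\pp\cdot\bm\xi$ are conserved along zero-energy trajectories, by Noether for the geodesic Hamiltonian of the Jacobi metric and Maupertuis' time reparametrization. A quarter-turn rotation of this vector-valued momentum map gives the Runge--Lenz-type vector
\[
\bm K=(R^2+r^2)\,\pp\times\eez-2(\rr\cdot\pp)\,\rr\times\eez .
\]
I would also verify $d\bm K/dt=0$ directly from Newton's equation $\dot\pp=-4\alpha\rr/(R^2-r^2)^3$ together with the constraint $\pp^2=2m\alpha/(R^2-r^2)^2$. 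This is the step where $E=0$ is essential, and I expect it to be the main computational obstacle. The terms from $\dot{\rr}$ and from $\dot\pp$ cancel only after the energy constraint is substituted.

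With $\bm K$ in hand, the key identity is
\[
\bm K\cdot\rr=(R^2+r^2)\,(\pp\times\eez)\cdot\rr=(R^2+r^2)L_z ,
\]
since $(\rr\times\eez)\cdot\rr=0$ and $(\pp\times\eez)\cdot\rr=\eez\cdot(\rr\times\pp)$. For $L_z\neq0$, dividing by $L_z$ gives exactly $r^2-2\bm a\cdot\rr+R^2=0$ with $\bm a=\bm K/(2L_z)$. Completing the square yields $|\rr-\bm a|^2=|\bm a|^2-R^2$, which gives \eqref{eq:center-radius-summary}.

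Nondegeneracy of the circle is then easy. The trajectory supplies real points on it, so $\rho^2\ge0$. If $\rho=0$, the only point would be $\bm a$ with $|\bm a|=R$, which lies on the excluded singular circle. Hence $\rho>0$ and $|\rr_c|^2=\rho^2+R^2>\rho^2$, which is the Pythagorean condition for orthogonality to $r=R$ and shows that the origin lies outside. Connectedness is automatic because a trajectory is the continuous image of an interval inside one component. Its image is therefore a connected arc of the circle intersected with $D_R$ or with $E_R$; the latter intersection is a single arc, since the circle meets $r=R$ in exactly two points.

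For $L_z=0$ I would argue directly. Here $\rr\times\pp=0$ forces $\pp\parallel\rr$, and the force is radial, so the motion stays on a line through the origin. In $D_R$ this line is a diameter, that is, a Poincar\'e geodesic through the center. In $E_R$ the motion stays on one of the two rays of that line beyond $r=R$, because it cannot cross the singular circle and cannot reach the origin. These are the inversion images of the diameter geodesics.
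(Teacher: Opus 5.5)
Your proposal is correct and follows the paper's skeleton. Your $\bm K=\bigl[(R^2+r^2)\pp-2(\rr\cdot\pp)\rr\bigr]\times\eez$ agrees componentwise with the paper's $K_x$, $K_y$. From there you use the same identity $\bm K\cdot\rr=L_z(R^2+r^2)$, the same completion of the square, and the same Pythagorean criterion $|\rr_c|^2=R^2+\rho^2$.

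The genuine difference is how conservation of $\bm K$ is established:
\begin{itemize}
\item The paper evaluates canonical brackets and gets the off-shell identity $\{\bm K,H\}=4H\,\crossz{\rr}$. This also records how conservation fails away from $H=0$, and it is the form the paper reuses for $\bm K_B$.
\item You derive $\bm K$ as the momentum map of the explicit Killing fields $\bm\xi_{\bm e}=(R^2+r^2)\bm e-2(\bm e\cdot\rr)\rr$. These are indeed Killing on both components: $2\,\bm\xi_{\bm e}\cdot\nabla\sigma+\nabla\cdot\bm\xi_{\bm e}=0$ for $\sigma=-\log|R^2-r^2|$. Conservation then follows from Noether plus the Maupertuis reparametrization.
\end{itemize}
Your route explains where $\bm K$ comes from, turning the paper's after-the-fact ``moment map'' remark into a derivation. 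It yields only on-shell information, but that is all the theorem needs.

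Your side arguments are sound and somewhat more careful than the paper's:
\begin{itemize}
\item You get $\rho>0$ because $\rho=0$ would put the orbit at the excluded point $|\bm a|=R$. The paper instead obtains the sharper $\rho^2=m\alpha/(2L_z^2)$ from the Casimir.
\item You treat connectedness explicitly, which the paper leaves implicit.
\item You handle radial motion through the origin via the invariant line, whereas the paper's argument uses $\dot\phi=0$ only away from the origin.
\end{itemize}

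One slip needs fixing: Newton's equation is $\dot\pp=-\nabla V=+4\alpha\rr/(R^2-r^2)^3$. With this sign, your direct check gives $\frac{d}{dt}\bigl[(R^2+r^2)\pp-2(\rr\cdot\pp)\rr\bigr]=\bigl[4\alpha(R^2-r^2)^{-2}-2\pp^2/m\bigr]\rr$, which vanishes exactly on $H=0$. With the minus sign you wrote, it would instead give $-8\alpha(R^2-r^2)^{-2}\rr$, so the backup verification would spuriously fail even though the Noether argument is sound.
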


The theorem gives the geometric answer to Newton's question: the hyperbolic
branch is characterized not merely by displaced circles, but by circles that
meet the singular circle orthogonally and exclude the force center.  The
remainder of the classical analysis explains why this happens.  We construct
an on-shell $\mathfrak{so}(2,1)$ moment map, identify its Casimir with the
hyperbolic kinetic energy, and prove that canonical inversion preserves the
moment-map components while intertwining the zero-energy flows.

The magnetic part begins from established hyperbolic Landau theory
\cite{ComtetHouston,Comtet,BarrosLandau}.  Constant-field trajectories on
$\mathbb H^2$ are known to be magnetic circles, horocycles, or hypercycles,
and a recent Kepler--Morse--Landau correspondence makes the Casimir
interpretation particularly explicit \cite{Plyushchay2026}.  Our contribution
is the exact realization of that theory inside \eqref{eq:H}: we derive the
radial field from the Newtonian variables, exhibit the shifted
$\mathfrak{so}(2,1)$ moment map, obtain the threshold
$Q^2=8m\alpha R^2$ directly from the off-center invariants, and prove that
circular inversion reverses $Q$ while preserving the shifted moment map and
Casimir.

\subsection*{Organization of the paper}

\Cref{sec:jacobi} converts the mechanical problem into Poincar\'e geometry.
\Cref{sec:symmetry,sec:orbits} derive the hidden symmetry and use it to recover
the orbit circles algebraically.  \Cref{sec:inversion,sec:collision} explain
the interior--exterior duality and the different meanings of the singular
boundary.  \Cref{sec:quantum-stackel} separates the quantum coupling transform
from genuine unitary equivalence and identifies the inverse-square threshold.
\Cref{sec:magnetic} develops the magnetic completion and its orbit trichotomy,
and \cref{sec:numerics} checks the exact formulas by direct integration.  The
conclusion returns to the motivating problem and states precisely what has
been completed.

\section{From the Newtonian system to the Poincar\'e disk}\label{sec:jacobi}

The first task is to identify the geometry selected by the potential.  For a
natural Hamiltonian $H=\pp^2/(2m)+V(\rr)$, an orbit of energy $E$ is, up to
reparametrization, a geodesic of the Jacobi metric
\begin{equation}
 ds_J^2=2m(E-V)\,ds_{\RR^2}^2,
\end{equation}
where $ds_{\RR^2}^2=dx^2+dy^2$ \cite{Arnold}.

\begin{proposition}[Hyperbolic Jacobi metric]
\label{prop:jacobi}
At $E=0$, the Jacobi metric associated with \eqref{eq:H} is
\begin{equation}
 ds_J^2
 =\frac{2m\alpha}{(R^2-r^2)^2}
  \left(dx^2+dy^2\right).
\label{eq:jacobi}
\end{equation}
On $D_R$ this is a constant positive multiple of the Poincar\'e disk metric
\begin{equation}
 ds_{\HH^2_R}^2
 =\frac{4R^4}{(R^2-r^2)^2}
  \left(dx^2+dy^2\right),
\label{eq:poincare}
\end{equation}
whose Gaussian curvature is $-1/R^2$.  The Jacobi metric itself has constant
curvature
\begin{equation}
 \boxed{\mathcal K_J=-\frac{2R^2}{m\alpha}}.
\label{eq:jacobi-curvature}
\end{equation}
\end{proposition}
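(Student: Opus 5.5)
The plan is to substitute $E=0$ and $V$ from \eqref{eq:H} directly into the Jacobi metric $ds_J^2=2m(E-V)\,ds_{\RR^2}^2$. Since $E-V=\alpha/(R^2-r^2)^2>0$ away from $r=R$, this gives \eqref{eq:jacobi} at once. The metric is a genuine Riemannian metric on each of $D_R$ and $E_R$ separately, because the conformal factor is positive there and blows up only on the singular circle.

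Next, compare \eqref{eq:jacobi} with \eqref{eq:poincare}. The ratio of conformal factors is the constant
\begin{equation*}
 \lambda=\frac{2m\alpha}{4R^4}=\frac{m\alpha}{2R^4},
\end{equation*}
so $ds_J^2=\lambda\,ds_{\HH^2_R}^2$ on $D_R$, and also on $E_R$ once \eqref{eq:poincare} is read as the same formula there.

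For the curvature of \eqref{eq:poincare}, I would use the standard formula for a conformal metric $e^{2\varphi}(dx^2+dy^2)$, namely $\mathcal K=-e^{-2\varphi}\Delta\varphi$. Here $\varphi=\log(2R^2)-\log(R^2-r^2)$. The radial Laplacian gives $\Delta\log(R^2-r^2)=-4R^2/(R^2-r^2)^2$, hence
\begin{equation*}
 \mathcal K=-\frac{(R^2-r^2)^2}{4R^4}\cdot\frac{4R^2}{(R^2-r^2)^2}=-\frac{1}{R^2}.
\end{equation*}
The computation never uses $r<R$; only the square $(R^2-r^2)^2$ enters. It therefore holds on $E_R$ too, which is consistent with circular inversion being an isometry of the formula (to be exploited later).

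Finally, scaling a metric by a constant $\lambda$ scales its Gaussian curvature by $1/\lambda$. This gives
\begin{equation*}
 \mathcal K_J=-\frac{1}{R^2}\cdot\frac{2R^4}{m\alpha}=-\frac{2R^2}{m\alpha},
\end{equation*}
which is \eqref{eq:jacobi-curvature}.

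There is no real obstacle. The only step needing care is the Laplacian of $\log(R^2-r^2)$, which equals $f''+f'/r$ with $f'=-2r/(R^2-r^2)$. One should also note that ``Poincaré disk'' refers to the model on $D_R$, while on $E_R$ the same formula is a metric isometric to it via inversion.
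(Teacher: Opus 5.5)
Your proof is correct and follows the paper's argument. You substitute $E=0$, note the constant ratio $m\alpha/(2R^4)$, and apply $\mathcal K=-e^{-2\varphi}\Delta\varphi$; the only difference is that you compute the Poincar\'e curvature $-1/R^2$ and rescale by $1/\lambda$, whereas the paper applies the formula directly to the Jacobi conformal factor (using $\log|R^2-r^2|$, which you should also write on $E_R$, though the derivatives are unchanged).
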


\begin{proof}
At zero energy,
\begin{equation}
 E-V(r)=\frac{\alpha}{(R^2-r^2)^2},
\end{equation}
which gives \eqref{eq:jacobi}.  Comparing \eqref{eq:jacobi} with
\eqref{eq:poincare},
\begin{equation}
 ds_J^2=\frac{m\alpha}{2R^4}\,ds_{\HH^2_R}^2.
\label{eq:constant-rescale}
\end{equation}
The factor is constant, so the two metrics have the same Levi--Civita
connection and the same unparametrized geodesics.

For completeness, write $ds_J^2=e^{2\sigma}(dx^2+dy^2)$, where
\begin{equation}
 \sigma=\frac12\log(2m\alpha)-\log|R^2-r^2|.
\end{equation}
For a conformally flat metric in two dimensions,
$\mathcal K=-e^{-2\sigma}\Delta_{\RR^2}\sigma$.  A direct radial calculation
gives
\begin{equation}
 \Delta_{\RR^2}\sigma
 =\frac{4R^2}{(R^2-r^2)^2},
\end{equation}
and hence \eqref{eq:jacobi-curvature}.
\end{proof}

This constant rescaling is the decisive geometric fact.  It does not merely
say that the Newtonian plane is conformally related to a hyperbolic surface;
it says that the zero-energy trajectories have exactly the same
unparametrized geodesics as the Poincar\'e disk.

\begin{remark}
The assertion that the physical trajectories and the Poincar\'e geodesics
coincide does not follow merely from conformal equivalence.  General conformal
rescalings change geodesics.  Here it follows from the stronger fact
\eqref{eq:constant-rescale}: the metrics differ only by a constant.
\end{remark}

\begin{corollary}
On $D_R$, every zero-energy trajectory is, as an unparametrized curve, either a
Euclidean circle orthogonal to $r=R$ or a diameter of the disk.
\end{corollary}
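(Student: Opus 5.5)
The corollary should follow from \cref{prop:jacobi} together with the classical description of Poincar\'e disk geodesics. I will argue in three steps.

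First, I would invoke the Maupertuis--Jacobi principle at $E=0$. On $D_R$ we have $E-V=\alpha/(R^2-r^2)^2>0$, so the Jacobi metric \eqref{eq:jacobi} is a genuine Riemannian metric on the whole open disk. Every zero-energy Newtonian trajectory in $D_R$ is therefore, after reparametrization, a geodesic of $ds_J^2$. By \eqref{eq:constant-rescale}, $ds_J^2$ is a constant multiple of $ds_{\HH^2_R}^2$. The Christoffel symbols are invariant under constant rescaling, so the two metrics have the same unparametrized geodesics. It then remains only to recall the geodesics of $ds_{\HH^2_R}^2$.

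Second, I would identify those geodesics. The map $\rr\mapsto\rr/R$ carries \eqref{eq:poincare} to $R^2$ times the standard unit Poincar\'e metric $4(dx^2+dy^2)/(1-|\bm x|^2)^2$. Under this map, Euclidean circles and lines orthogonal to $r=R$ correspond exactly to those orthogonal to the unit circle. It therefore suffices to treat the unit disk. Diameters are geodesics because they are fixed pointwise by a reflection, which is an isometry; each such diameter is the unique geodesic through the origin in its direction. The orientation-preserving isometries of the unit disk are the M\"obius maps $z\mapsto e^{i\theta}(z-w)/(1-\bar w z)$. These act transitively on unit tangent vectors and preserve the class of generalized circles orthogonal to the boundary, because M\"obius maps are conformal and preserve both the boundary and generalized circles. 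Now take an arbitrary geodesic through a point $p$ with direction $v$. I would map $p$ to $0$ by an isometry, so the image is a diameter. Mapping back, the original geodesic is a generalized circle orthogonal to $r=R$, that is, a Euclidean circle orthogonal to $r=R$ or a line through the origin. By uniqueness of geodesics with given initial data, every geodesic arises this way.

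Third, I would handle parametrization and endpoints. A Newtonian trajectory traces a connected portion of such a curve. It is the whole curve inside $D_R$ unless it reaches the origin or collides with $r=R$ in finite time; the latter is deferred to \cref{sec:collision}. The corollary is only a statement about unparametrized curves, so containment in one of these curves is all that is needed.

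I expect the only real obstacle to be the second step, namely justifying cleanly that M\"obius disk automorphisms are hyperbolic isometries and preserve orthogonal generalized circles. I would verify the isometry by the one-line computation $|dz'|/(1-|z'|^2)=|dz|/(1-|z|^2)$ for $z'=(z-w)/(1-\bar wz)$. Preservation of orthogonal circles then follows from conformality together with preservation of the unit circle. Alternatively, one could cite this as standard, since the paper explicitly disclaims novelty for Poincar\'e geodesics.
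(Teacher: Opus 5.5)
Your proposal is correct and follows the paper's route. The constant rescaling \eqref{eq:constant-rescale} from \cref{prop:jacobi} reduces the claim to the classical classification of Poincar\'e-disk geodesics, which the paper simply cites from Helgason while you re-derive it via reflections and M\"obius disk automorphisms. One small slip does not affect the argument: the origin is a regular point of the interior flow (the force vanishes there), so reaching it does not truncate a trajectory, and the finite-time collision with $r=R$ happens only at the two ends, so each trajectory in fact traces the entire open arc of its supporting curve inside $D_R$.
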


\begin{proof}
These are precisely the geodesics of the Poincar\'e disk model
\cite{Helgason}.
\end{proof}

The exterior component $E_R$ carries the same constant-curvature metric.  Its
relation to the interior component will be made precise by the canonical
inversion theorem in \cref{sec:inversion}.  The global relation is with the
punctured disk $D_R\setminus\{0\}$ rather than all of $D_R$: the missing
origin is the inversion image of the exterior end $r=\infty$.

\section{Hidden symmetry: the \texorpdfstring{$\mathfrak{so}(2,1)$}{so(2,1)} moment map}\label{sec:symmetry}

The Poincar\'e disk description predicts three infinitesimal isometries:
one rotation and two noncompact translations.  We now recover their
phase-space moment map directly in the Newtonian variables.  Let
\begin{equation}
 L_z=xp_y-yp_x
\label{eq:Lz}
\end{equation}
be the angular momentum.  Rotational invariance gives
$\{L_z,H\}=0$.  Define the vector
\begin{equation}
 \boxed{
 \bm K
 =L_z\rr+(\rr\cdot\pp)\,\crossz{\rr}
 -R^2\crossz{\pp}.
 }
\label{eq:K}
\end{equation}
In components,
\begin{align}
 K_x&=(R^2+x^2-y^2)p_y-2xy\,p_x,
\label{eq:Kx}\\
 K_y&=(x^2-y^2-R^2)p_x+2xy\,p_y.
\label{eq:Ky}
\end{align}

\begin{proposition}[Conformal integral]
\label{prop:K-conformal}
The vector \eqref{eq:K} satisfies
\begin{equation}
 \boxed{
 \{\bm K,H\}=4H\,\crossz{\rr}.
 }
\label{eq:K-H}
\end{equation}
It is therefore conserved on the zero-energy hypersurface.
\end{proposition}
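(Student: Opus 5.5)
The plan is a direct Poisson-bracket computation. It exploits the fact that $\bm K$ is linear in momenta and that its coefficient vector fields are conformal Killing fields of the flat plane. I use the convention $\{f,g\}=\partial_{\rr}f\cdot\partial_{\pp}g-\partial_{\pp}f\cdot\partial_{\rr}g$. From \eqref{eq:Kx}--\eqref{eq:Ky}, each component has the form $K_a=X_a^j(\rr)\,p_j$, with
\begin{equation}
 X_x=\bigl(-2xy,\;R^2+x^2-y^2\bigr),\qquad
 X_y=\bigl(x^2-y^2-R^2,\;2xy\bigr).
\end{equation}
I split $H=T+V$ with $T=\pp^2/(2m)$ and $V=-\alpha(R^2-r^2)^{-2}$, and treat the two brackets separately.

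\emph{Kinetic part.} For a momentum-linear function one has $\{X^jp_j,T\}=\frac1m(\partial_iX^j)p_ip_j$, so only the symmetric part of $\partial_iX^j$ contributes. I check that each $X_a$ is a conformal Killing field, meaning $\partial_{(i}X_{a,j)}=\lambda_a\delta_{ij}$. These fields are combinations of a translation and a special conformal generator, so this should hold. With $\lambda_a$ equal to half the divergence, it then follows that $\{K_a,T\}=2\lambda_aT$. For $X_x$ the divergence is $-2y-2y=-4y$, so $\lambda_x=-2y$. Similarly $\lambda_y=2x$. In vector form, $(\lambda_x,\lambda_y)=2\,\crossz{\rr}$. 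I will also verify the off-diagonal condition $\partial_xX^y+\partial_yX^x=0$ explicitly; for example, for $X_x$ it reads $2x-2x=0$.

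\emph{Potential part.} Here $\{K_a,V\}=-X_a\cdot\nabla V$. Since $\nabla V=4V\,\rr/(R^2-r^2)$, this equals $-4V\,(\rr\cdot X_a)/(R^2-r^2)$. The key identity to check is
\begin{equation}
 \rr\cdot X_a=-\tfrac12\lambda_a\,(R^2-r^2).
\end{equation}
For $a=x$ this reads $-2x^2y+R^2y+x^2y-y^3=y(R^2-r^2)$, and the $a=y$ case is analogous. The identity gives $\{K_a,V\}=2\lambda_aV$.

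Adding the two parts yields $\{K_a,H\}=2\lambda_aH$, that is, $\{\bm K,H\}=4H\,\crossz{\rr}$. Conservation on $H=0$ is then immediate, since $\dot{\bm K}=\{\bm K,H\}$ vanishes on that level set.

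The only real obstacle is bookkeeping: getting the bracket sign convention right, and confirming the matching identity between $\rr\cdot X_a$ and the conformal factor. That identity is exactly where the specific exponent $2$ in $(R^2-r^2)^{-2}$ enters. A general power $(R^2-r^2)^{-k}$ would give $k\lambda_a V$ instead of $2\lambda_a V$, so the kinetic and potential terms would no longer combine into a multiple of $H$. I would present the two component checks explicitly and note that the $y$-component follows from the $x$-component by a rotation by $\pi/2$, which is consistent with $\bm K$ being a vector under $L_z$.
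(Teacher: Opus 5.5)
Your proposal is correct and is essentially the paper's proof: a direct canonical-bracket evaluation from the component formulas for $K_x, K_y$, giving $\{K_x,H\}=-4yH$ and $\{K_y,H\}=4xH$. Your split into kinetic and potential parts makes the cancellation more transparent than the paper's one-line computation. It rests on the conformal Killing property of $X_a$ and on the identity $\rr\cdot X_a=-\tfrac12\lambda_a(R^2-r^2)$, which says precisely that $X_a$ is a Killing field of the Jacobi metric, and it also shows why the result is a multiple of $H$ and why the exponent $2$ is needed.
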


\begin{proof}
Using $\{x_i,p_j\}=\delta_{ij}$ and differentiating
\eqref{eq:Kx}--\eqref{eq:Ky}, one obtains
\begin{equation}
 \{K_x,H\}=-4yH,
 \qquad
 \{K_y,H\}=4xH,
\end{equation}
which is equivalent to \eqref{eq:K-H}.
\end{proof}

\begin{proposition}[$\mathfrak{so}(2,1)$ algebra]
\label{prop:so21}
The generators satisfy
\begin{align}
 \{K_x,K_y\}&=-4R^2L_z,
\label{eq:KK}\\
 \{K_x,L_z\}&=-K_y,
\label{eq:KxL}\\
 \{K_y,L_z\}&=K_x.
\label{eq:KyL}
\end{align}
Equivalently, with
\begin{equation}
 J_0=L_z,
 \qquad
 J_1=\frac{K_x}{2R},
 \qquad
 J_2=\frac{K_y}{2R},
\end{equation}
we have
\begin{equation}
 \{J_0,J_1\}=J_2,
 \qquad
 \{J_0,J_2\}=-J_1,
 \qquad
 \{J_1,J_2\}=-J_0.
\label{eq:so21}
\end{equation}
\end{proposition}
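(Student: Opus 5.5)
Since $K_x$, $K_y$ and $L_z$ are all linear in the momenta, I will prove the brackets by direct computation with the canonical relations $\{x_i,p_j\}=\delta_{ij}$. For two such functions $A=\sum_i a_i(\rr)p_i$ and $B=\sum_j b_j(\rr)p_j$, the bracket is again linear in momenta:
\begin{equation*}
\{A,B\}=\sum_j\Bigl(\sum_i a_i\,\partial_i b_j-b_i\,\partial_i a_j\Bigr)p_j .
\end{equation*}
Up to sign convention, this is the momentum function of the Lie bracket of the vector fields $a=(a_x,a_y)$ and $b=(b_x,b_y)$. The proof therefore reduces to commutators of polynomial vector fields on $\RR^2$. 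From \eqref{eq:Kx}--\eqref{eq:Ky} and \eqref{eq:Lz} the coefficient fields are
\begin{equation*}
K_x:\ (-2xy,\;R^2+x^2-y^2),\qquad K_y:\ (x^2-y^2-R^2,\;2xy),\qquad L_z:\ (-y,\;x).
\end{equation*}

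\textbf{Rotation brackets.} For \eqref{eq:KxL} and \eqref{eq:KyL} I will avoid the component computation. The cleanest argument is that $\bm K$ in \eqref{eq:K} is built covariantly from $\rr$, $\pp$, the rotation-invariant scalars $L_z$ and $\rr\cdot\pp$, and the rotation-equivariant map $\crossz{}$. Hence $\bm K$ transforms as a vector under the rotation generated by $L_z$. This gives $\{K_i,L_z\}=\epsilon_{ij}K_j$ with the same sign as $\{x,L_z\}=-y$ and $\{y,L_z\}=x$, which is exactly \eqref{eq:KxL}--\eqref{eq:KyL}. As a check, I will verify one component explicitly with the formula above; for example, the $p_y$ coefficient of $\{K_x,L_z\}$ should come out as $-2xy$, matching $-K_y$.

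\textbf{The bracket $\{K_x,K_y\}$.} This is the only genuine computation and the step I expect to carry the content. I will organize it by degree in the positions. The $R^2$-independent parts of the two fields are the special conformal fields, whose bracket vanishes, so there is no quartic term. The $R^4$ term vanishes because both $R^2$ parts are constant fields. What remains are the cross terms, each the bracket of a constant field $R^2\partial_y$ or $-R^2\partial_x$ with a quadratic field, and these are linear in $\rr$. Collecting them, I expect a multiple of the rotation field $(-y,x)$, which gives $\{K_x,K_y\}=-4R^2L_z$. The main risk is the overall sign. I will fix it by evaluating the $p_y$ coefficient at a single point, say $(x,y)=(1,0)$: the formula gives $2R^2\cdot 2x$ from one cross term and a contribution from the other, and the total should reproduce $-4R^2x$.

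\textbf{Rescaled form.} Finally, substituting $J_1=K_x/2R$ and $J_2=K_y/2R$ gives $\{J_1,J_2\}=-4R^2L_z/(4R^2)=-J_0$. Antisymmetry of \eqref{eq:KxL}--\eqref{eq:KyL} gives $\{J_0,J_1\}=J_2$ and $\{J_0,J_2\}=-J_1$. These relations show that the three generators close into the algebra \eqref{eq:so21}, and the opposite sign in the $\{J_1,J_2\}$ relation is what makes it $\mathfrak{so}(2,1)$ rather than $\mathfrak{so}(3)$.
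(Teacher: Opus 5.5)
Your route is the paper's own, direct evaluation of the canonical brackets, and your organization of it is sound. The covariance argument gives the rotation brackets with the right sign. As an index formula with $\epsilon_{xy}=+1$ it should read $\{L_z,K_i\}=\epsilon_{ij}K_j$, i.e.\ $\{K_i,L_z\}=-\epsilon_{ij}K_j$, but your anchor $\{x,L_z\}=-y$ is correct. The quadratic parts $a^{(2)}=(-2xy,\,x^2-y^2)$ and $b^{(2)}=(x^2-y^2,\,2xy)$ are special conformal fields and do commute. The constant parts $a^{(0)}=(0,R^2)$ and $b^{(0)}=(-R^2,0)$ also commute, so only the constant--quadratic cross terms survive.

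The genuine gap is the sign of your bracket formula, and that sign is the whole content of this proposition. With $\{x_i,p_j\}=\delta_{ij}$, i.e.\ $\{f,g\}=\sum_i(\partial_{x_i}f\,\partial_{p_i}g-\partial_{p_i}f\,\partial_{x_i}g)$, one gets $\{A,B\}=\sum_j\sum_i(b_i\partial_i a_j-a_i\partial_i b_j)p_j$. This is the negative of what you wrote: momentum functions are an anti-homomorphic image of the vector-field Lie bracket. A quick test shows it: $\{p_x,xp_x\}=-p_x$, whereas your formula gives $+p_x$.

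As written, your own checks therefore fail. For $\{K_x,L_z\}$, your formula gives the $p_y$ coefficient $a_i\partial_i b_y-b_i\partial_i a_y=-2xy-(-4xy)=+2xy$, not $-2xy$. For $\{K_x,K_y\}$, the two cross terms each contribute $+2R^2x$ to the $p_y$ coefficient, which yields $+4R^2L_z$.

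The point evaluation at $(1,0)$ does not catch this. It runs through the same formula and simply asserts the target $-4R^2x$, and $2R^2\cdot 2x$ is not the value of either cross term.

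This is not a harmless overall sign. Your rotation brackets come from covariance and are correct, so combining them with $+4R^2L_z$ would give $\{J_1,J_2\}=+J_0$. Those are the $\mathfrak{so}(3)$ relations, which is exactly the sign you single out as distinguishing the two algebras.

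With the corrected formula the cross terms are $b^{(0)}\cdot\nabla a^{(2)}=(2R^2y,\,-2R^2x)$ and $-a^{(0)}\cdot\nabla b^{(2)}=(2R^2y,\,-2R^2x)$. They sum to $-4R^2(-y,x)$, the coefficient field of $-4R^2L_z$. So $\{K_x,K_y\}=-4R^2L_z$, and your rescaling step then goes through as you describe.
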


\begin{proof}
The relations follow by direct evaluation of the canonical Poisson brackets.
The sign in \eqref{eq:KxL} is fixed by the convention
$\epsilon_{xy}=+1$ and by the fact that $\{L_z,K_x\}=K_y$.
\end{proof}

\begin{proposition}[Casimir and hyperbolic kinetic energy]
\label{prop:casimir}
The quadratic Casimir is
\begin{equation}
 \mathcal C
 =\frac{\bm K^2}{4R^2}-L_z^2.
\label{eq:casimir}
\end{equation}
It admits the identity
\begin{equation}
 \boxed{
 \mathcal C
 =\frac{(R^2-r^2)^2\pp^2}{4R^2}
 =2R^2H_{\mathrm{geo}},
 }
\label{eq:casimir-geodesic}
\end{equation}
where
\begin{equation}
 H_{\mathrm{geo}}
 =\frac12g^{ij}_{\HH^2_R}p_ip_j
 =\frac{(R^2-r^2)^2\pp^2}{8R^4}
\label{eq:Hgeo}
\end{equation}
 is the geodesic Hamiltonian for \eqref{eq:poincare}.  On the zero-energy
hypersurface of \eqref{eq:H},
\begin{equation}
 \boxed{
 \mathcal C=\frac{m\alpha}{2R^2}.
 }
\label{eq:casimir-shell}
\end{equation}
\end{proposition}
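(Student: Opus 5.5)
We prove the identity \eqref{eq:casimir-geodesic} by a direct computation in vector form, avoiding components. Write $u=\rr\cdot\pp$, $L=L_z$, and note the planar identity $\pp^2r^2=u^2+L^2$. The vectors $\rr$ and $\crossz{\rr}$ are orthogonal, both of length $r$. Moreover,
\[
\pp=\frac{u\,\rr+L\,\crossz{\rr}}{r^2}.
\]
Hence $\crossz{\pp}=(u\,\crossz{\rr}-L\,\rr)/r^2$. Substituting into \eqref{eq:K} gives
\[
\bm K=L\Bigl(1+\frac{R^2}{r^2}\Bigr)\rr+u\Bigl(1-\frac{R^2}{r^2}\Bigr)\crossz{\rr}.
\]
This expansion is valid for $\rr\neq0$; the final identity is polynomial, so it extends to $\rr=0$ by continuity. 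By orthogonality,
\[
\bm K^2=\frac{L^2(r^2+R^2)^2+u^2(r^2-R^2)^2}{r^2}.
\]
Therefore
\[
\bm K^2-4R^2L^2=\frac{(L^2+u^2)(r^2-R^2)^2}{r^2}=\pp^2(R^2-r^2)^2.
\]
Dividing by $4R^2$ yields the first equality of \eqref{eq:casimir-geodesic}.

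The second equality follows by reading off the inverse metric from \eqref{eq:poincare}. There $g^{ij}=\delta^{ij}(R^2-r^2)^2/(4R^4)$, which gives \eqref{eq:Hgeo}, and $2R^2H_{\mathrm{geo}}$ matches. That $\mathcal C$ is the Casimir follows from \cref{prop:so21}. With $J_1^2+J_2^2-J_0^2=\bm K^2/(4R^2)-L_z^2$, the brackets \eqref{eq:so21} give $\{J_0,\mathcal C\}=2J_1J_2-2J_2J_1=0$. The analogous computation shows the brackets with $J_1$ and $J_2$ also vanish.

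For the on-shell value \eqref{eq:casimir-shell}, set $H=0$, which gives $\pp^2=2m\alpha/(R^2-r^2)^2$. Inserting this into $(R^2-r^2)^2\pp^2/(4R^2)$ cancels the $r$-dependence exactly and leaves $m\alpha/(2R^2)$, on both $D_R$ and $E_R$.

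The only delicate step is the decomposition of $\pp$ in the orthogonal frame $\{\rr,\crossz{\rr}\}$, where the sign of $L_z$ must be checked. For example, at $\rr=(x,0)$ we have $L_z=xp_y$ and $\crossz{\rr}=(0,x)$, which confirms the sign. A component check against \eqref{eq:Kx}--\eqref{eq:Ky} at a sample point will confirm the formula for $\bm K$.
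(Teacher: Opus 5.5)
Your proof is correct and follows the paper's route. The paper simply states that substituting \eqref{eq:K} into \eqref{eq:casimir} gives the identity, then reads off the inverse Poincar\'e metric and imposes $H=0$; your decomposition of $\pp$ in the frame $\{\rr,\crossz{\rr}\}$ is a clean way of carrying out that substitution, reproducing $\bm K^2=4R^2L_z^2+(R^2-r^2)^2\pp^2$, and your check via \cref{prop:so21} that $J_1^2+J_2^2-J_0^2$ Poisson-commutes with the generators is a useful addition the paper leaves implicit.
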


\begin{proof}
Substitution of \eqref{eq:K} into \eqref{eq:casimir} gives the first identity
in \eqref{eq:casimir-geodesic}.  The second follows from the inverse metric of
\eqref{eq:poincare}.  Finally, $H=0$ implies
\begin{equation}
 (R^2-r^2)^2\pp^2=2m\alpha,
\end{equation}
which yields \eqref{eq:casimir-shell}.
\end{proof}

Thus the Runge--Lenz-type quantities are not accidental constants.  They
are the moment-map components of the noncompact isometry group of the
hyperbolic Jacobi geometry.  The orbit shape can therefore be read from a
coadjoint invariant rather than obtained by solving a radial equation.

\section{From conserved quantities to the orbit circles}\label{sec:orbits}

The geometric classification can now be recovered without integrating the
equations of motion.  The next identity turns the conserved moment map into
the Euclidean equation of each supporting circle.

\begin{proposition}[Algebraic orbit equation]
\label{prop:orbit-equation}
The generators satisfy the identity
\begin{equation}
 \boxed{
 \bm K\cdot\rr=L_z(r^2+R^2).
 }
\label{eq:Kdotr}
\end{equation}
For $L_z\neq0$, every zero-energy trajectory therefore lies on the circle
\begin{equation}
 r^2-2\bm a\cdot\rr+R^2=0,
 \qquad
 \bm a=\frac{\bm K}{2L_z}.
\label{eq:orbit-circle}
\end{equation}
\end{proposition}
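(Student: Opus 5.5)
The identity \eqref{eq:Kdotr} is purely kinematic: it holds on all of phase space, not only on shell. I will verify it directly from the vector form \eqref{eq:K}. Dotting each term with $\rr$ gives three contributions. First, $L_z\rr\cdot\rr=L_z r^2$. Second, $(\rr\cdot\pp)(\crossz{\rr})\cdot\rr=0$, because $\crossz{\rr}$ is orthogonal to $\rr$. Third, $-R^2(\crossz{\pp})\cdot\rr$. Using the scalar triple product,
\begin{equation*}
 (\crossz{\pp})\cdot\rr=\eez\cdot(\pp\times\rr)=-\eez\cdot(\rr\times\pp)=-L_z ,
\end{equation*}
so the third term equals $+R^2L_z$. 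Summing the three gives $\bm K\cdot\rr=L_z(r^2+R^2)$.

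As a check against sign conventions, I will also confirm the identity from the components \eqref{eq:Kx}--\eqref{eq:Ky}. Expanding $xK_x+yK_y$, the terms should collect into $(x^2+y^2+R^2)(xp_y-yp_x)$. This sign bookkeeping in the triple product is the only place where an error could creep in, so the component check is the main safeguard.

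For the orbit equation, I restrict to a zero-energy trajectory. There $L_z$ is conserved because $\{L_z,H\}=0$, and $\bm K$ is conserved by \cref{prop:K-conformal}, since the right-hand side of \eqref{eq:K-H} vanishes when $H=0$. So along the trajectory both $L_z$ and $\bm K$ are constants, and \eqref{eq:Kdotr} becomes a fixed algebraic relation between the coordinates $x$ and $y$.

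Assuming $L_z\neq0$, I divide \eqref{eq:Kdotr} by $L_z$ and set $\bm a=\bm K/(2L_z)$. This gives
\begin{equation*}
 2\bm a\cdot\rr=r^2+R^2,
\end{equation*}
which is exactly \eqref{eq:orbit-circle}, a constant-coefficient equation for a Euclidean circle. Completing the square yields $|\rr-\bm a|^2=|\bm a|^2-R^2$. The right-hand side must be nonnegative because the trajectory is a nonempty set of real points satisfying the equation. Positivity (a genuine circle rather than a single point) is deferred to later results, such as the Casimir relation \eqref{eq:casimir-shell}, and is not needed for the statement itself.
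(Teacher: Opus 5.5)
Your proof is correct and matches the paper's: it likewise dots \eqref{eq:K} with $\rr$, uses $(\crossz{\rr})\cdot\rr=0$ and $(\crossz{\pp})\cdot\rr=-L_z$, and divides by $L_z$. Your explicit appeal to conservation of $L_z$ and $\bm K$ on $H=0$ (via \cref{prop:K-conformal}) and your component check only spell out steps the paper leaves implicit.
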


\begin{proof}
Taking the scalar product of \eqref{eq:K} with $\rr$ gives
\begin{align}
 \bm K\cdot\rr
 &=L_zr^2
 +(\rr\cdot\pp)(\crossz{\rr})\cdot\rr
 -R^2(\crossz{\pp})\cdot\rr \\
 &=L_zr^2+R^2L_z,
\end{align}
where $(\crossz{\rr})\cdot\rr=0$ and
$(\crossz{\pp})\cdot\rr=-L_z$.  Dividing by $L_z$ yields
\eqref{eq:orbit-circle}.
\end{proof}

Completing the square gives
\begin{equation}
 |\rr-\bm a|^2=|\bm a|^2-R^2.
\label{eq:completed-circle}
\end{equation}
Hence
\begin{equation}
 \rr_c=\bm a,
 \qquad
 \rho^2=|\bm a|^2-R^2.
\label{eq:center-radius}
\end{equation}
The Casimir gives the more explicit zero-energy relations
\begin{equation}
 |\bm a|^2
 =R^2+\frac{m\alpha}{2L_z^2},
 \qquad
 \boxed{
 \rho^2=\frac{m\alpha}{2L_z^2}.
 }
\label{eq:radius-L}
\end{equation}

\begin{corollary}[Orthogonality and exterior force center]
\label{cor:orthogonal}
The supporting circle \eqref{eq:orbit-circle} intersects $r=R$
orthogonally, and the origin lies outside the supporting circle.
\end{corollary}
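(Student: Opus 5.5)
The plan is to read both claims directly off the completed-square form \eqref{eq:completed-circle}, using the relations \eqref{eq:center-radius} and \eqref{eq:radius-L}. No further dynamical input is needed: everything follows from the identity $|\bm a|^2=\rho^2+R^2$.

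\textbf{Orthogonality.} Two circles with centers $\bm c_1,\bm c_2$ and radii $\rho_1,\rho_2$ meet orthogonally exactly when $|\bm c_1-\bm c_2|^2=\rho_1^2+\rho_2^2$. This is the Pythagorean condition at an intersection point, where the two radii are perpendicular. Here the singular circle has center $0$ and radius $R$, and the supporting circle has center $\bm a$ and radius $\rho$. By \eqref{eq:center-radius} we have $|\bm a|^2=\rho^2+R^2$, so the condition holds.

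We must also check that the circles actually intersect, so that orthogonality is not vacuous. On the zero-energy shell, \eqref{eq:radius-L} gives $\rho^2=m\alpha/(2L_z^2)>0$, so the circle is nondegenerate. Moreover $|\rho-R|<|\bm a|=\sqrt{\rho^2+R^2}<\rho+R$, so the two circles meet in exactly two points. As an alternative check, I would substitute $r=R$ into \eqref{eq:orbit-circle}. This gives $\bm a\cdot\rr=R^2$, a chord line. At an intersection point $\rr$, the radius vector $\rr$ of $r=R$ is tangent to the supporting circle, since $\rr\cdot(\rr-\bm a)=R^2-R^2=0$. This confirms that the radii are perpendicular.

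\textbf{Origin outside.} Evaluating the left side of \eqref{eq:orbit-circle} at $\rr=0$ gives $R^2>0$. Equivalently, $|0-\bm a|^2-\rho^2=R^2>0$, so $|\bm r_c|>\rho$ and the origin lies strictly outside the supporting circle.

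The only delicate point is the nondegeneracy check: confirming that $\rho^2>0$ on shell and that the circles genuinely intersect. It is settled by \eqref{eq:radius-L} together with the triangle inequalities above.
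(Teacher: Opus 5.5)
Your proof is correct and follows the paper's argument. You apply the orthogonality criterion $d^2=R^2+\rho^2$ to $|\bm a|^2=\rho^2+R^2$, and you use $|\bm a|^2-\rho^2=R^2>0$ to place the origin outside the circle. The paper leaves implicit the checks you add: that $\rho^2>0$ on shell, that the circles genuinely meet in two points, and the direct tangency computation $\rr\cdot(\rr-\bm a)=0$; these make the orthogonality claim non-vacuous.
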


\begin{proof}
Two Euclidean circles with center separation $d$, radii $R$ and $\rho$, meet
orthogonally precisely when $d^2=R^2+\rho^2$.  Here
$d=|\bm a|$, so the condition is exactly \eqref{eq:center-radius}.  Moreover,
$|\bm a|^2-\rho^2=R^2>0$, hence $|\bm a|>\rho$ and the origin lies outside the
orbit circle.
\end{proof}

This is the direct geometric realization of the hyperbolic off-center
case.  In the spherical predecessor the force center is enclosed by each
supporting circle; here it is excluded.  The relation
$|\rr_c|^2=R^2+\rho^2$ is therefore the geometric signature that distinguishes
the hyperbolic branch.

The complete geometry is shown in \cref{fig:exact-geometry}.  The plotted
orbit is constructed from \eqref{eq:orbit-circle}; the right-angle markers
make the orthogonality relation explicit, while the relative positions of the
force center and the supporting circle display the physical ``off-center''
content of \cref{thm:main-classification}.

\begin{figure}[!tbp]
\centering
\includegraphics[width=0.94\linewidth]{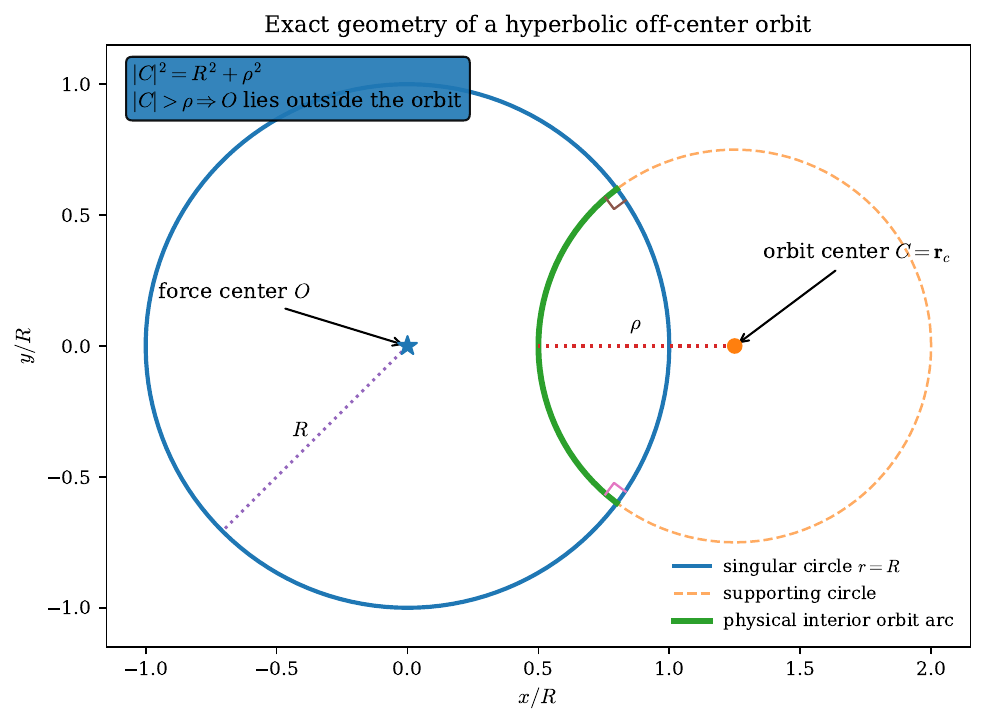}
\caption{Exact geometry of a nonradial zero-energy orbit.  The physical
interior trajectory is the part of a Euclidean supporting circle lying in
$r<R$.  Its center $C=\rr_c$ and radius $\rho$ obey
$|C|^2=R^2+\rho^2$, proving both orthogonality to the singular circle and
that the force center $O$ lies outside the supporting circle.}
\label{fig:exact-geometry}
\end{figure}

\begin{proposition}[Radial trajectories]
\label{prop:radial}
If $L_z=0$, then the trajectory lies on a fixed line through the origin.
Its interior branch is a diameter geodesic of the Poincar\'e disk; an exterior
branch is a radial ray on the same supporting line.
\end{proposition}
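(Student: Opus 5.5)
Two independent ingredients give the result: conservation of $L_z$, which holds for every energy, and the vector $\bm K$ restricted to the zero-energy shell. If $L_z=0$, then $\bm r\times\bm p=0$, so the momentum is always parallel to $\bm r$ (when $\bm r\neq0$). The force $-\nabla V$ is radial, so the velocity and acceleration both stay parallel to $\bm r$. Hence the direction $\hat{\bm r}$ is constant: $\dot{\hat{\bm r}}=(\dot{\bm r}-(\hat{\bm r}\cdot\dot{\bm r})\hat{\bm r})/r=0$. The trajectory therefore stays on a fixed line through the origin. The only care needed is the passage through $\bm r=0$ inside $D_R$. There the potential is smooth, so the solution continues analytically through the origin along the same line. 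A cleaner way to avoid dividing by $r$ is to fix a unit vector $\bm n$ perpendicular to the initial position and velocity. Then $\bm n\cdot\bm r$ satisfies a linear second-order ODE, $m\,\tfrac{d^2}{dt^2}(\bm n\cdot\bm r)=-\tfrac{V'(r)}{r}(\bm n\cdot\bm r)$, with zero initial data, and uniqueness gives $\bm n\cdot\bm r\equiv0$.

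An algebraic cross-check uses \eqref{eq:K} with $L_z=0$. Then $\bm K=(\bm r\cdot\bm p)\,\bm e_z\times\bm r-R^2\,\bm e_z\times\bm p$, and with $\bm p=\lambda\bm r$ this becomes $(r^2-R^2)\lambda\,\bm e_z\times\bm r$. Conservation of $\bm K$ on the shell (\cref{prop:K-conformal}) then forces $\bm e_z\times\bm r$ to keep a fixed direction whenever $\lambda\neq0$, which is again the fixed-line statement. The Casimir relation \eqref{eq:casimir-shell} gives $|\bm K|^2=2m\alpha>0$, so $\bm K\neq0$. This makes the line explicit: it is perpendicular to $\bm K$, the natural degenerate ($|\bm a|\to\infty$) limit of \eqref{eq:orbit-circle}.

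For the interior branch, I would invoke the corollary following \cref{prop:jacobi}. Zero-energy interior trajectories are Poincaré geodesics, and the geodesics through the origin are exactly the diameters. The Jacobi distance to $r=R$ is infinite, so the geodesic is the full open diameter, traversed up to reparametrization. Whether the Newtonian motion actually covers the whole diameter or collides with $r=R$ in finite time is deferred to \cref{sec:collision}; here only the supporting curve is claimed.

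For the exterior branch, the motion is confined to $r>R$ on the same line and cannot cross the origin. Energy conservation gives $\dot r^2=2\alpha/(m(R^2-r^2)^2)>0$, so $\dot r$ never vanishes. The motion is therefore monotone in $r$ along a single ray from the origin, between $r=R$ and $r=\infty$, which is a radial ray on the supporting line.

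The main obstacle is the regularity point at the origin in the interior case, which the linear-ODE uniqueness argument handles. Monotonicity in the exterior case is immediate from the energy relation.
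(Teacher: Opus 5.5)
Your proposal is correct and follows essentially the paper's route. The paper argues that $L_z=mr^2\dot\phi=0$ forces a constant polar angle away from the origin, then identifies the diameters as the $L_z=0$ geodesics of the disk; this is your $\dot{\hat{\bm r}}=0$ step together with your appeal to the Poincar\'e corollary. Your linear-ODE uniqueness argument for $\bm n\cdot\bm r$ is valid because $V'(r)/r=-4\alpha/(R^2-r^2)^3$ is smooth at $r=0$, and it makes rigorous the passage through the origin that the paper leaves implicit; the $\bm K$ cross-check and the exterior monotonicity argument are correct additions.
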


\begin{proof}
Conservation of $L_z$ and the identity $L_z=mr^2\dot\phi$ imply
$\dot\phi=0$ away from the origin.  Thus the polar angle is constant.  Diameter
lines are the $L_z=0$ geodesics of the disk model.
\end{proof}

\section{Inversion as an interior--exterior dynamical duality}
\label{sec:inversion}

The interior and exterior regions are disconnected in Newtonian
configuration space, but their orbit families are not independent.  Circular
inversion is the natural candidate for relating them.  The point is to prove
that it acts on phase space and on the Hamiltonian flow, rather than only on
the drawn orbit circles.

Consider circular inversion in the singular circle,
\begin{equation}
 \rr' =\mathcal I(\rr)
 =\frac{R^2}{r^2}\rr.
\label{eq:inversion-r}
\end{equation}
Its canonical cotangent lift is
\begin{equation}
 \boxed{
 \pp'
 =\frac{r^2}{R^2}
 \left(
 \pp-2\frac{\rr\cdot\pp}{r^2}\rr
 \right).
 }
\label{eq:inversion-p}
\end{equation}
Indeed, \eqref{eq:inversion-p} is
$\pp'=(D\mathcal I)^{-T}\pp$, so the map
$(\rr,\pp)\mapsto(\rr',\pp')$ is symplectic.

\begin{theorem}[Canonical inversion symmetry]
\label{thm:inversion}
Under \eqref{eq:inversion-r}--\eqref{eq:inversion-p},
\begin{equation}
 \boxed{
 H(\rr',\pp')=\frac{r^4}{R^4}H(\rr,\pp).
 }
\label{eq:H-inversion}
\end{equation}
Moreover,
\begin{equation}
 \boxed{
 L_z'=L_z,
 \qquad
 \bm K'=\bm K.
 }
\label{eq:LK-inversion}
\end{equation}
Consequently inversion maps the zero-energy hypersurface to itself and
exchanges exterior branches with their images in the punctured interior
$D_R\setminus\{0\}$, carrying the same values of the symmetry generators.
\end{theorem}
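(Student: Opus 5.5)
The plan is to verify the three transformation laws directly, using a few elementary identities for the inversion map rather than brute-force expansion. First I record how the basic scalars transform. From \eqref{eq:inversion-r}, $r'=R^2/r$, so $r'^2=R^4/r^2$ and
\[
R^2-r'^2=\frac{R^2}{r^2}\,(r^2-R^2).
\]
From \eqref{eq:inversion-p}, the map $\pp\mapsto \pp-2\frac{\rr\cdot\pp}{r^2}\rr$ is the Euclidean reflection across the line orthogonal to $\rr$. It therefore preserves $|\pp|$, so $|\pp'|=(r^2/R^2)|\pp|$. Combining,
\[
(R^2-r'^2)^2\,\pp'^2=(R^2-r^2)^2\,\pp^2.
\]
This says the hyperbolic kinetic form $(R^2-r^2)^2\pp^2$ is invariant, as expected since inversion is an isometry of the Jacobi metric.

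Next I write $H=\frac{1}{(R^2-r^2)^2}\bigl[(R^2-r^2)^2\pp^2/(2m)-\alpha\bigr]$. The bracket is invariant by the identity above. The prefactor picks up $(R^2-r'^2)^{-2}=\frac{r^4}{R^4}(R^2-r^2)^{-2}$. This gives \eqref{eq:H-inversion} at once. In particular $H=0$ is preserved, and since $r\mapsto R^2/r$ exchanges $r>R$ with $0<r<R$, exterior states map to punctured-interior states.

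For the generators, $\rr'\cdot\pp'$ is computed directly:
\[
\rr'\cdot\pp'=\frac{R^2}{r^2}\cdot\frac{r^2}{R^2}\bigl(\rr\cdot\pp-2\,\rr\cdot\pp\bigr)=-\rr\cdot\pp .
\]
For the cross product, $\rr\times\rr=0$ gives $L_z'=\frac{R^2}{r^2}\frac{r^2}{R^2}(\rr\times\pp)_z=L_z$. Then I substitute into \eqref{eq:K} term by term:
\[
L_z\rr'=\frac{R^2}{r^2}L_z\rr,\qquad
(\rr'\cdot\pp')\,\crossz{\rr'}=-\frac{R^2}{r^2}(\rr\cdot\pp)\,\crossz{\rr},
\]
\[
-R^2\crossz{\pp'}=-r^2\crossz{\pp}+2(\rr\cdot\pp)\,\crossz{\rr}.
\]
Summing, I need to show that the result equals $L_z\rr+(\rr\cdot\pp)\crossz{\rr}-R^2\crossz{\pp}$. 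The cleanest route uses the planar identity
\[
r^2\,\crossz{\pp}=(\rr\cdot\pp)\,\crossz{\rr}-L_z\,\rr .
\]
It is checked by decomposing $\pp$ in the orthogonal basis $\{\rr,\crossz{\rr}\}$: $\pp=\frac{(\rr\cdot\pp)\rr+L_z\,\crossz{\rr}}{r^2}$, and $\eez\times(\crossz{\rr})=-\rr$. Using it, I express every term in this basis, so that both $\bm K'$ and $\bm K$ become combinations $A\,\rr+B\,\crossz{\rr}$. Then I compare coefficients.

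This coefficient bookkeeping is the only real obstacle. Signs and the $R^2/r^2$ factors must conspire. Writing $s=\rr\cdot\pp$, I expect both sides to reduce to
\[
L_z\Bigl(1+\frac{R^2}{r^2}\Bigr)\rr+s\Bigl(1-\frac{R^2}{r^2}\Bigr)\crossz{\rr}.
\]
Two consistency checks would catch any error. First, \eqref{eq:Kdotr} must be respected: $\bm K'\cdot\rr'=L_z(r'^2+R^2)$ has to agree with $\bm K\cdot\rr'$. Second, the Casimir \eqref{eq:casimir-geodesic} must be invariant, which is already guaranteed by the kinetic-form identity.

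Finally, the consequence follows. Since $H'=(r^4/R^4)H$ with a positive factor, $\{H=0\}$ maps to itself. The inversion is a symplectic involution that preserves the level set, so it maps zero-energy Hamiltonian orbits to zero-energy orbits reparametrized by $dt'=(r^4/R^4)\,dt$, because the Hamiltonian vector fields of $H\circ\mathcal I$ and $\frac{r^4}{R^4}H$ agree on $H=0$. The equalities $L_z'=L_z$ and $\bm K'=\bm K$ then show that the image branch carries the same symmetry generators.
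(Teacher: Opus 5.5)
Your proposal is correct and follows essentially the same route as the paper: $\pp'^2=(r^4/R^4)\pp^2$ together with $(R^2-r'^2)^{-2}=(r^4/R^4)(R^2-r^2)^{-2}$ gives the Hamiltonian law, and $L_z'=L_z$, $\bm K'=\bm K$ follow by direct substitution. Both sides do reduce to the common form you predicted, $L_z(1+R^2/r^2)\rr+s(1-R^2/r^2)\,\eez\times\rr$, and your decomposition of $\pp$ in the basis $\{\rr,\eez\times\rr\}$ is the same one the paper uses in its magnetic inversion proof.
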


\begin{proof}
The transformation \eqref{eq:inversion-p} is a radial reflection followed by
a scale, and therefore
\begin{equation}
 \pp'^2=\frac{r^4}{R^4}\pp^2.
\end{equation}
Also
\begin{equation}
 R^2-r'^2
 =R^2-\frac{R^4}{r^2}
 =-\frac{R^2}{r^2}(R^2-r^2),
\end{equation}
so
\begin{equation}
 \frac{1}{(R^2-r'^2)^2}
 =\frac{r^4}{R^4}\frac{1}{(R^2-r^2)^2}.
\end{equation}
This proves \eqref{eq:H-inversion}.  Direct substitution into the definitions
\eqref{eq:Lz} and \eqref{eq:K} gives \eqref{eq:LK-inversion}.
\end{proof}

\begin{corollary}[Orbit equivalence of the zero-energy flows]
\label{cor:inversion-flow}
Let $\Phi$ denote the canonical inversion
\eqref{eq:inversion-r}--\eqref{eq:inversion-p} and let $X_H$ be the
Hamiltonian vector field.  Then
\begin{equation}
 \boxed{
 \Phi_*X_H
 =\frac{r^4}{R^4}X_H
 \qquad\text{on }H=0.
 }
\label{eq:inversion-vector-field}
\end{equation}
Consequently $\Phi$ maps every zero-energy integral curve to a zero-energy
integral curve after a positive reparametrization of time.
\end{corollary}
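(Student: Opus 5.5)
The corollary is a statement about how a symplectic map that rescales a Hamiltonian acts on its Hamiltonian vector field, and I would derive it from \cref{thm:inversion} alone. Write $H'=H\circ\Phi$ and $f=r^4/R^4$, so that \eqref{eq:H-inversion} reads $H\circ\Phi=f\,H$. Since $\Phi$ is symplectic, pullback commutes with the Hamiltonian vector field construction: $X_{H\circ\Phi}=\Phi^*X_H=(\Phi^{-1})_*X_H$. The whole argument therefore reduces to computing $X_{fH}$.

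The key step is the product rule for Hamiltonian vector fields,
\begin{equation*}
 X_{fH}=f\,X_H+H\,X_f .
\end{equation*}
On the level set $H=0$ the second term vanishes, so $X_{fH}=fX_H$ there. It follows that $(\Phi^{-1})_*X_H=f\,X_H$ on $\{H=0\}$. This restriction makes sense because $\{H=0\}$ is $\Phi$-invariant: $f>0$ and $H\circ\Phi=fH$ give the invariance, as \cref{thm:inversion} already asserts.

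Next I must pass from $\Phi^{-1}$ to $\Phi$ and keep track of the point at which $f$ is evaluated. Since inversion is an involution, $\Phi^{-1}=\Phi$; indeed $\mathcal I\circ\mathcal I=\mathrm{id}$, and the cotangent lift of an involution is again an involution. So $\Phi_*X_H=f\,X_H$ on $H=0$. This is exactly \eqref{eq:inversion-vector-field}, with $f$ read as the factor at the source point $\rr$ (equivalently $R^4/r'^4$ at the image point). This is the step I expect to need the most care. If instead one applies the identity $H\circ\Phi=f H$ with $\Phi$ in place of $\Phi^{-1}$, the factor becomes $r'^4/R^4=R^4/r^4$. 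I would resolve the ambiguity by using $f\circ\Phi=1/f$, which holds because $r r'=R^2$, and by stating explicitly the convention for where $r$ is evaluated.

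Finally, the reparametrization claim follows from a standard lemma. If $Y=gX$ with $g>0$ smooth, then integral curves of $Y$ are integral curves of $X$ reparametrized by $d\tau/dt=g$. Take an integral curve $\gamma(t)$ of $X_H$ in $\{H=0\}$. Then $\Phi\circ\gamma$ has velocity $\Phi_*X_H=(f\circ\gamma)\,X_H(\Phi\circ\gamma)$, with the positivity of $f$ and its evaluation point as in the previous step. So $\Phi\circ\gamma$ is an integral curve of $X_H$ in the new time
\begin{equation*}
 \tau(t)=\int^t f(\gamma(s))\,ds ,
\end{equation*}
which is strictly increasing because $f>0$ away from the origin. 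The origin is excluded, since $\Phi$ is only defined on $\rr\neq0$; this is consistent with the punctured-interior statement in \cref{thm:inversion}.
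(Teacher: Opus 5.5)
Your route is the paper's. Symplecticity and $\Phi^{-1}=\Phi$ give $\Phi_*X_H=X_{H\circ\Phi}=X_{fH}$, the Leibniz rule removes $HX_f$ on $H=0$, and positivity of the factor gives the time change. The pointwise vector-field identity you derive is correct.

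However, the step you flagged as delicate is resolved backwards. By definition $(\Phi_*X)(w)=D\Phi_{\Phi^{-1}(w)}X(\Phi^{-1}(w))$. So in $\Phi_*X_H=X_{fH}=fX_H$, all three fields are evaluated at the same point $w$, and $w$ is the image point. Hence
$D\Phi_z X_H(z)=f(\Phi(z))\,X_H(\Phi(z))=\frac{R^4}{r(z)^4}X_H(\Phi(z))$.
The factor is $r'^4/R^4$ at the image, i.e.\ $R^4/r^4$ at the source. That is the reciprocal of the "$r^4/R^4$ at the source point (equivalently $R^4/r'^4$ at the image)" that you state.

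A direct check confirms this. $D\mathcal I$ is conformal with factor $R^2/r^2$, and the zero-energy speed is $\sqrt{2\alpha/m}/|R^2-r^2|$. Dividing the speed of $\Phi\circ\gamma$ by the $X_H$-speed at the image point gives $\frac{R^2}{r^2}\cdot\frac{|R^2-r'^2|}{|R^2-r^2|}=\frac{R^4}{r^4}$.

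Consequently two of your formulas are wrong: the velocity formula $\frac{d}{dt}(\Phi\circ\gamma)=(f\circ\gamma)\,X_H(\Phi\circ\gamma)$ and the time change $\tau(t)=\int^t f(\gamma(s))\,ds$. The correct choice is $d\tau/dt=f(\Phi(\gamma(t)))=R^4/r(\gamma(t))^4$. This is exactly the paper's $d\tau/dt=f(z'(t))$, and it agrees with the explicit factor $r'^4/R^4=R^4/r^4$ in the magnetic version \eqref{eq:magnetic-flow-inversion}.

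Nor is this a matter of convention to be fixed by "stating where $r$ is evaluated". Once $\Phi_*X_H$ is read as a vector field, the identity $X_{fH}(w)=f(w)X_H(w)$ determines where $f$ is evaluated. The relation $f\circ\Phi=1/f$ only tells you that the two candidate readings differ, not which one is right.

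The qualitative conclusion, orbit equivalence after a positive reparametrization, still holds because both factors are positive. The explicit factor and the time change should be corrected as above.
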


\begin{proof}
The map $\Phi$ is symplectic and involutive, so
\begin{equation}
 \Phi_*X_H=X_{H\circ\Phi}.
\end{equation}
By \eqref{eq:H-inversion}, $H\circ\Phi=fH$ with
$f=r^4/R^4$.  The Leibniz rule for Hamiltonian vector fields gives
\begin{equation}
 X_{fH}=fX_H+H X_f.
\end{equation}
Restriction to $H=0$ proves \eqref{eq:inversion-vector-field}.  Since $f>0$
on both components, if $z'(t)=\Phi(z(t))$ then the change of parameter
$d\tau/dt=f(z'(t))$ makes $z'(\tau)$ an ordinary integral curve of $X_H$.
\end{proof}

\begin{corollary}[Exterior geometry and the missing point]
\label{cor:exterior-punctured}
The inversion \eqref{eq:inversion-r} is an isometry from the exterior Jacobi
metric on $E_R$ onto the restriction of the interior Jacobi metric to the
punctured disk $D_R\setminus\{0\}$.  Under this isometry, the exterior end
$r=\infty$ corresponds to the omitted origin.
\end{corollary}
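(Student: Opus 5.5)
We check directly that inversion pulls the Jacobi metric back to itself; this is a pointwise computation on the conformal factor. Write the zero-energy Jacobi metric \eqref{eq:jacobi} as $g_J=\lambda(r)^2\,(dx^2+dy^2)$ with $\lambda(r)^2=2m\alpha/(R^2-r^2)^2$. The inversion $\mathcal I(\rr)=R^2\rr/r^2$ is a Euclidean conformal map:
\begin{equation}
 \mathcal I^*(dx^2+dy^2)=\frac{R^4}{r^4}\,(dx^2+dy^2).
\end{equation}
This follows because $D\mathcal I=(R^2/r^2)(\mathbb 1-2\hat{\rr}\hat{\rr}^{T})$ is a scale times a reflection. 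This is the same fact used for $\pp'^2$ in the proof of \cref{thm:inversion}.

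Hence
\begin{equation}
 \mathcal I^*g_J=\lambda(r')^2\frac{R^4}{r^4}(dx^2+dy^2),
\end{equation}
and we need the identity $\lambda(r')^2=(r^4/R^4)\lambda(r)^2$. This is exactly the computation
\begin{equation}
 (R^2-r'^2)^2=\frac{R^4}{r^4}(R^2-r^2)^2
\end{equation}
already carried out in the proof of \cref{thm:inversion}. The two factors $r^4/R^4$ and $R^4/r^4$ cancel, giving $\mathcal I^*g_J=g_J$. This is the metric shadow of \eqref{eq:H-inversion}, and the only real content of the argument. One could alternatively derive it from \eqref{eq:H-inversion} via the Legendre/cometric correspondence, but the direct pullback is cleaner.

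It remains to pin down domains and the ideal point. Since $r'=R^2/r$, the map sends $r>R$ bijectively onto $0<r'<R$. It is a smooth involution, hence a diffeomorphism $E_R\to D_R\setminus\{0\}$, and therefore an isometry of $(E_R,g_J)$ onto $(D_R\setminus\{0\},g_J)$. Under this map, $r\to\infty$ corresponds to $r'\to0$, so the exterior end is sent to the omitted origin.

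To justify calling the origin "missing", note that $g_J$ is smooth and nondegenerate at $\rr=0$, with $\lambda(0)^2=2m\alpha/R^4$. The origin is therefore an interior point of the complete hyperbolic disk at finite Jacobi distance. Correspondingly, the exterior end $r=\infty$ is at finite Jacobi distance: $\int^\infty\lambda\,dr\sim\int^\infty dr/r^2<\infty$. Thus $E_R$ is metrically incomplete there, and its completion adds exactly one point, the preimage of $0$.

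The only delicate step is bookkeeping. We must make sure the sign in $R^2-r'^2$ disappears under squaring, and state that the isometry is onto the punctured disk rather than all of $D_R$. Neither poses a genuine obstacle.
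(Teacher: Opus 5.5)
Your proposal is correct and follows essentially the same route as the paper: the conformality of inversion, $d\rr'^2=(R^4/r^4)\,d\rr^2$, combined with $(R^2-r'^2)^2=(R^4/r^4)(R^2-r^2)^2$, so that the two factors cancel in the Jacobi metric. Your additional bookkeeping (the diffeomorphism $E_R\to D_R\setminus\{0\}$, the end $r\to\infty$ mapping to $r'\to0$, and the finite Jacobi distance to $r=\infty$) fills in details the paper leaves implicit in its proof; the paper treats the finite distance separately in \cref{prop:exterior-infinity}.
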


\begin{proof}
Euclidean inversion obeys
\begin{equation}
 d\rr'^2=\frac{R^4}{r^4}d\rr^2.
\end{equation}
Combining this with the transformation of $(R^2-r^2)^2$ gives
\begin{equation}
 \frac{d\rr'^2}{(R^2-r'^2)^2}
 =\frac{d\rr^2}{(R^2-r^2)^2}.
\end{equation}
\end{proof}

\begin{remark}
The theorem and \cref{cor:inversion-flow,cor:exterior-punctured} supply an
exact duality between the exterior system and the punctured interior system,
including their zero-energy trajectories and reparametrized flows, but they do
not define a physical passage through $r=R$.
The Newtonian vector field is singular there.  The two branches are related by
inversion, rather than joined by an automatically determined collision law.
\end{remark}

The pointwise action of the inversion map is displayed in
\cref{fig:inversion}.  Interior points and their images lie on the same
supporting circle, in agreement with the exact invariance of $L_z$ and $\bm K$
in \eqref{eq:LK-inversion}.

\begin{figure}[!tbp]
\centering
\includegraphics[width=0.94\linewidth]{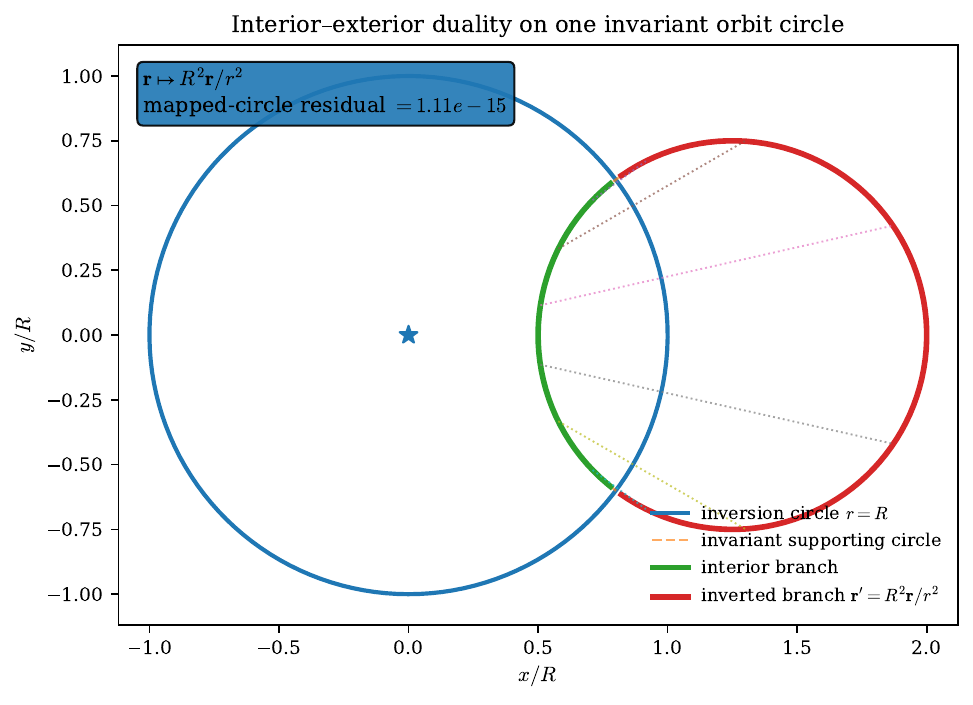}
\caption{Circular inversion maps the interior branch of an orthogonal
supporting circle to its exterior branch while leaving the same supporting
circle invariant.  Dotted radial segments join representative inversion
pairs.  The singular circle itself is not part of the Newtonian configuration
space.}
\label{fig:inversion}
\end{figure}

\section{What the singular boundary means physically}
\label{sec:collision}

The hyperbolic picture creates an apparent paradox.  The circle $r=R$ is
the ideal boundary of the Poincar\'e disk, yet the Newtonian force diverges
there.  The following results resolve the paradox by comparing the two time
and distance notions directly.

\begin{proposition}[Finite Newtonian collision time]
\label{prop:finite-time}
A zero-energy orbit that approaches the singular circle reaches it in finite
Newtonian time.  If
\begin{equation}
 \delta(t)=|R-r(t)|,
\end{equation}
then, as $t\uparrow t_*$,
\begin{equation}
 \boxed{
 t_*-t
 \sim
 R\sqrt{\frac{m}{2\alpha}}\,\delta(t)^2.
 }
\label{eq:collision-asymptotic}
\end{equation}
In particular,
\begin{equation}
 \delta(t)
 \sim
 \left(
 \sqrt{\frac{2\alpha}{m}}\frac{t_*-t}{R}
 \right)^{1/2}.
\end{equation}
\end{proposition}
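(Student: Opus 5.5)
The plan is to reduce the question to a one-dimensional radial problem using the two conserved quantities already available, energy ($H=0$) and $L_z$, and then integrate the radial equation near $r=R$. In polar coordinates, with $L_z=mr^2\dot\phi$, the zero-energy condition reads
\begin{equation*}
 \frac m2\dot r^2+\frac{L_z^2}{2mr^2}=\frac{\alpha}{(R^2-r^2)^2},
 \qquad\text{so}\qquad
 \dot r^2=\frac{2\alpha}{m(R^2-r^2)^2}-\frac{L_z^2}{m^2r^2}.
\end{equation*}
Near the singular circle the first term diverges like $\delta^{-2}$. The centrifugal term stays bounded there because $r\to R>0$. This holds on both $D_R$ and $E_R$, and radial orbits ($L_z=0$) are included trivially.

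\textbf{Step 1: radial monotonicity near the circle.} Pick $\delta_0>0$ small enough that the right-hand side above is strictly positive for $0<\delta<\delta_0$. In that annulus $\dot r$ never vanishes, so it has a constant sign. An orbit that enters the annulus heading toward $r=R$ therefore moves monotonically toward the circle and cannot turn back. This is the precise meaning I will give to ``approaches the singular circle'': $\delta(t)\to0$ monotonically. Using $\delta$ as the parameter along this segment is then legitimate.

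\textbf{Step 2: asymptotics of $|\dot r|$.} Write $R^2-r^2=\pm\delta(R+r)=\pm\delta(2R\mp\delta)$. Then
\begin{equation*}
 |\dot\delta|=|\dot r|=\sqrt{\frac{2\alpha}{m}}\,\frac{1}{2R\delta}\bigl(1+O(\delta)\bigr),
\end{equation*}
where the $O(\delta)$ term absorbs both the correction from $R+r=2R+O(\delta)$ and the centrifugal term. The centrifugal term enters only at relative order $\delta^2$.

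\textbf{Step 3: integrate.} The time remaining is
\begin{equation*}
 t_*-t=\int_0^{\delta(t)}\frac{d\delta'}{|\dot r|}
 =\frac{2R}{\sqrt{2\alpha/m}}\int_0^{\delta(t)}\delta'\bigl(1+O(\delta')\bigr)\,d\delta'.
\end{equation*}
The integrand vanishes at $\delta'=0$, so this integral is finite, which proves that the collision happens in finite time. Evaluating it gives $t_*-t=R\sqrt{m/(2\alpha)}\,\delta^2\bigl(1+O(\delta)\bigr)$, which is \eqref{eq:collision-asymptotic}. Inverting this relation gives the stated formula for $\delta(t)$.

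I expect the main obstacle to be the uniformity in Step 2. The $O(\delta)$ remainder must be controlled uniformly along the trajectory so that it can be integrated, and that control comes from the constancy of $L_z$ and the fact that $r$ is bounded near $R$. I would handle this by writing $|\dot r|^{-1}$ exactly as $2R\delta\sqrt{m/(2\alpha)}\,g(\delta)$, with $g$ continuous on $[0,\delta_0)$ and $g(0)=1$, and then applying dominated convergence to the ratio of the integral to $\delta^2$.
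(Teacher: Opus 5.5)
Your proof is correct. It follows the paper's overall structure: take the zero-energy speed $|\dot\rr|=\sqrt{2\alpha/m}\,/\,|R^2-r^2|$, use $|R^2-r^2|\sim2R\delta$, and integrate $|\dot\delta|\sim\sqrt{2\alpha/m}\,/(2R\delta)$.

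It differs from the paper at the one step that carries the weight, namely why the rate of approach $|\dot\delta|$ is asymptotically the full speed.

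\begin{itemize}
\item The paper argues geometrically. Every trajectory lies on a circle or line orthogonal to $r=R$, so the velocity becomes normal at the collision. This relies on the orbit classification, hence on $\bm K$ and the hyperbolic geometry.
\item You use only conservation of $L_z$. The tangential velocity $|L_z|/(mr)$ stays bounded while the speed blows up like $\delta^{-1}$, so $|\dot r|$ equals the full speed up to relative order $\delta^2$.
\end{itemize}

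What your route buys:
\begin{itemize}
\item It is self-contained and treats $D_R$, $E_R$ and the radial sector uniformly.
\item It makes explicit two things the paper's ``$\sim$'' argument leaves implicit: the monotonicity (no turning back near $r=R$) and the $1+O(\delta)$ error.
\item It shows that orthogonal arrival at the singular circle is forced by angular-momentum conservation for any central potential whose speed diverges there.
\end{itemize}
The paper's version instead ties the collision asymptotics to the orbit geometry, which is the point of its narrative.

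The uniformity issue you flag in Step 2 is not a real obstacle. Since $H$ and $L_z$ are fixed along the orbit, $1/|\dot r|$ is an explicit function of $\delta$ alone. So $g$ is a single continuous function with $g(0)=1$, and continuity at $0$ already gives $\int_0^{\delta}\delta' g(\delta')\,d\delta'\sim\delta^2/2$. No dominated-convergence argument is needed.
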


\begin{proof}
The zero-energy condition gives
\begin{equation}
 \frac{m}{2}|\dot\rr|^2
 =\frac{\alpha}{(R^2-r^2)^2},
\end{equation}
so
\begin{equation}
 |\dot\rr|
 =\sqrt{\frac{2\alpha}{m}}
  \frac{1}{|R^2-r^2|}.
\label{eq:speed}
\end{equation}
Every nonradial disk geodesic meets $r=R$ orthogonally, and the same is true
for the radial geodesics.  Thus the leading normal velocity is the full speed.
Since
\begin{equation}
 |R^2-r^2|\sim2R\delta,
\end{equation}
we obtain
\begin{equation}
 \left|\frac{d\delta}{dt}\right|
 \sim
 \sqrt{\frac{2\alpha}{m}}\frac{1}{2R\delta}.
\end{equation}
Integration gives \eqref{eq:collision-asymptotic}.
\end{proof}

\begin{proposition}[Infinite hyperbolic distance]
\label{prop:infinite-distance}
The same singular circle lies at infinite distance in both the Poincar\'e and
Jacobi metrics.
\end{proposition}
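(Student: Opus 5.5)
Since the Jacobi metric is a constant multiple of the Poincar\'e metric by \eqref{eq:constant-rescale}, lengths in the two metrics differ only by the factor $\sqrt{m\alpha/(2R^4)}$. It therefore suffices to show that the singular circle is at infinite distance in the Jacobi metric \eqref{eq:jacobi}. We treat the interior $D_R$ and the exterior $E_R$ separately, and use \cref{cor:exterior-punctured} as a consistency check rather than as the main argument.

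The key step is a lower bound for the length of an arbitrary piecewise $C^1$ curve $\gamma$ that starts at a point with $r=r_0\neq R$ and tends to the circle $r=R$. Since $dx^2+dy^2\ge dr^2$, along $\gamma$ we have
\begin{equation}
 ds_J\ge \sqrt{2m\alpha}\,\frac{|dr|}{|R^2-r^2|}.
\end{equation}
Hence the Jacobi length of $\gamma$ is at least
\begin{equation}
 \sqrt{2m\alpha}\,\left|\int_{r_0}^{r_1}\frac{dr}{R^2-r^2}\right|
 =\frac{\sqrt{2m\alpha}}{2R}\left|\log\frac{|R+r_1|\,|R-r_0|}{|R-r_1|\,|R+r_0|}\right|
\end{equation}
for every intermediate radius $r_1$ along the curve. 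If $\gamma$ wanders non-monotonically in $r$, the bound still holds: the total variation of $r$ dominates the net change, so the integral of $|dr|$ weighted by the positive function $1/|R^2-r^2|$ is at least the integral over the net radial interval. Letting $r_1\to R$ on the same side as $r_0$, the logarithm diverges. This holds in $D_R$ and in $E_R$ alike, because no curve can cross $r=R$ within a single component.

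Finally, we rule out reaching the circle at finite length. Suppose a curve of finite Jacobi length accumulated on $r=R$. Then there would be radii $r_1$ arbitrarily close to $R$ attained along it, and the bound above would exceed any finite number. This contradiction shows that the distance from any interior or exterior point to the singular circle is $+\infty$. For the exterior, the same conclusion can be cross-checked through the inversion isometry of \cref{cor:exterior-punctured}, which maps neighbourhoods of $r=R$ in $E_R$ onto neighbourhoods of $r=R$ in $D_R$.

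The main obstacle is only bookkeeping: justifying the reduction to the radial variable for non-radial, non-monotone curves, which the total-variation remark above handles. The computation itself is elementary. We may also note the contrast with \cref{prop:finite-time}: the Newtonian speed \eqref{eq:speed} blows up like $1/\delta$, which makes the collision time finite even though the Jacobi distance, which accumulates like $\int d\delta/\delta$, diverges logarithmically.
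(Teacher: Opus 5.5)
Your argument is correct. It rests on the same computation as the paper: the logarithmically divergent radial integral of $1/(R^2-r^2)$, transferred between the Poincar\'e and Jacobi metrics by the constant factor in \eqref{eq:constant-rescale}.

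The difference is in what gets bounded. The paper evaluates the Poincar\'e length of a radial segment from the origin, $R\log\frac{R+r}{R-r}$, and lets $r\uparrow R$. That length equals the true distance only because diameters are minimizing geodesics of the disk, a fact the paper uses without stating it. It also treats only $D_R$.

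Your comparison $|dr|\le|d\rr|$ gives a lower bound on the length of every curve, from any starting point, in $E_R$ as well as $D_R$. It therefore shows directly that no finite-length curve reaches $r=R$, without knowing the geodesics and without using the inversion isometry of \cref{cor:exterior-punctured}.

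The paper's version gains only brevity and an explicit closed-form distance from the center. Your bound recovers that formula as its equality case for radial curves.
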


\begin{proof}
Along a radial curve in the Poincar\'e metric,
\begin{equation}
 d_{\HH}(0,r)
 =\int_0^r\frac{2R^2}{R^2-s^2}\,ds
 =R\log\left(\frac{R+r}{R-r}\right),
\end{equation}
which diverges as $r\uparrow R$.  The Jacobi metric differs by the constant
factor in \eqref{eq:constant-rescale}, so its distance diverges as well.
\end{proof}

For interior trajectories, and for nonradial exterior trajectories whose two
ends approach $r=R$, the reparametrized hyperbolic geodesic is complete even
though the Newtonian flow terminates at the singular circle in finite time.
The exterior radial sector has a second and qualitatively different end at
$r=\infty$, which is the puncture identified in
\cref{cor:exterior-punctured}.

\begin{proposition}[Exterior radial escape]
\label{prop:exterior-infinity}
Along a zero-energy radial trajectory in $E_R$, the end $r=\infty$ is at
finite Jacobi distance but is reached only after infinite Newtonian time.  More
precisely, from any $r_0>R$,
\begin{equation}
 d_J(r_0,\infty)
 =\sqrt{2m\alpha}
  \int_{r_0}^{\infty}\frac{dr}{r^2-R^2}
 =\frac{\sqrt{2m\alpha}}{2R}
  \log\left(\frac{r_0+R}{r_0-R}\right)
 <\infty,
\label{eq:exterior-finite-jacobi-distance}
\end{equation}
whereas
\begin{equation}
 t(r)-t(r_0)
 =\sqrt{\frac{m}{2\alpha}}
  \int_{r_0}^{r}(s^2-R^2)\,ds
 \longrightarrow\infty
 \qquad (r\to\infty).
\label{eq:exterior-infinite-newton-time}
\end{equation}
In particular,
\begin{equation}
 r(t)\sim
 \left(3\sqrt{\frac{2\alpha}{m}}\,t\right)^{1/3}
 \qquad (t\to\infty),
\label{eq:exterior-radial-asymptotic}
\end{equation}
up to a translation of the Newtonian time origin.
\end{proposition}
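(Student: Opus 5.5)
The plan is to work entirely along a single radial line, which \cref{prop:radial} allows: for $L_z=0$ the polar angle is constant, so the motion reduces to one degree of freedom in $r$ on the half-line $r>R$. Both claims then come from comparing two one-dimensional integrals built from the zero-energy data. The Jacobi length element is $\sqrt{2m\alpha}\,dr/|R^2-r^2|$. The Newtonian speed is given by \eqref{eq:speed}, which on a radial trajectory reads $|\dot r|=\sqrt{2\alpha/m}\,/(r^2-R^2)$ for $r>R$.

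\textbf{Jacobi distance.} By \eqref{eq:jacobi}, along the ray the Jacobi arclength from $r_0$ to $\infty$ is $\sqrt{2m\alpha}\int_{r_0}^\infty dr/(r^2-R^2)$. The integrand decays like $r^{-2}$, so the integral converges. Partial fractions, $1/(r^2-R^2)=\frac{1}{2R}\bigl(\frac{1}{r-R}-\frac{1}{r+R}\bigr)$, give the logarithm in \eqref{eq:exterior-finite-jacobi-distance}.

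One subtlety is whether the radial ray actually realizes the distance, since $d_J$ is an infimum over all curves. This is harmless for the claim of finiteness: the ray length is an upper bound, so the end $r=\infty$ lies at finite distance. It is consistent with \cref{cor:exterior-punctured}, because the inverted image is the radial segment to the origin, which is a Poincar\'e diameter and hence minimizing. I would state the displayed equality as the length along the radial trajectory, with minimality following from that isometry.

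\textbf{Newtonian time.} On an outgoing branch $\dot r>0$ and never vanishes, since the speed is strictly positive for finite $r$. So $t$ is a smooth monotone function of $r$ with $dt/dr=\sqrt{m/(2\alpha)}\,(r^2-R^2)$. Integrating gives \eqref{eq:exterior-infinite-newton-time}, whose right-hand side equals
\[
\sqrt{m/(2\alpha)}\,\bigl[\tfrac13(r^3-r_0^3)-R^2(r-r_0)\bigr],
\]
and this diverges as $r\to\infty$. Inverting the leading term $t\sim\sqrt{m/(2\alpha)}\,r^3/3$ gives $r^3\sim3\sqrt{2\alpha/m}\,t$. The lower-order terms $-R^2 r$ and the constants only shift $t$ by $O(r)=o(r^3)$, so they do not affect the asymptotic, which yields \eqref{eq:exterior-radial-asymptotic} after absorbing constants into the time origin.

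\textbf{Main obstacle.} The computations are elementary, so the only delicate step is justifying the reduction to $dt/dr$. That requires the trajectory to be monotone in $r$, with no turning point on the exterior ray. This holds because at zero energy the kinetic energy equals $\alpha/(R^2-r^2)^2>0$ and all of it is radial, so $\dot r$ cannot change sign. Incoming branches are then the time reversals of outgoing ones.
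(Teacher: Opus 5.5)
Your proposal is correct and follows the paper's proof: the radial Jacobi line element $\sqrt{2m\alpha}\,dr/(r^2-R^2)$ gives the distance, and inverting $dr/dt=\sqrt{2\alpha/m}/(r^2-R^2)$ gives the time integral and the leading $r^3/3$ asymptotic. Your extra remarks are sound refinements that the paper leaves implicit: the monotonicity of $r$ on the exterior ray, and the minimality of the ray obtained by transporting it, via the inversion isometry, to a Poincar\'e diameter.
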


\begin{proof}
For radial motion, the Jacobi line element on $E_R$ is
\begin{equation}
 ds_J=\frac{\sqrt{2m\alpha}}{r^2-R^2}\,dr,
\end{equation}
which gives \eqref{eq:exterior-finite-jacobi-distance}.  The zero-energy
condition gives
\begin{equation}
 \frac{dr}{dt}
 =\sqrt{\frac{2\alpha}{m}}\frac{1}{r^2-R^2}
\end{equation}
for the outgoing branch.  Integration yields
\eqref{eq:exterior-infinite-newton-time}; retaining the leading $r^3/3$ term
gives \eqref{eq:exterior-radial-asymptotic}.
\end{proof}

Thus the punctured exterior Jacobi geometry is geodesically incomplete only in
the radial direction toward $r=\infty$; adding the missing point corresponds,
under inversion, to restoring the origin of the full Poincar\'e disk.  By
contrast, the singular circle remains an infinite-distance ideal boundary.
The full Euclidean supporting circle should therefore be viewed as a geometric
completion of singular Newtonian branches, not as an automatically selected
physical continuation through $r=R$.

\section{Quantum correspondence: what is and is not equivalent}
\label{sec:quantum-stackel}

The classical geometry suggests a quantum relation, but two distinct
statements must be kept separate.  One is a differential-equation transform
in which a spectral parameter becomes a coupling constant.  The other is a
unitary equivalence between self-adjoint operators.  This section formulates
both statements and then identifies the boundary threshold seen by the naive
flat problem.

\subsection{The St\"ackel coupling transform}

We first formulate the direct differential-equation correspondence.  On
$D_R$, write
\begin{equation}
 ds_{\HH^2_R}^2
 =\Omega^2(r)ds_{\RR^2}^2,
 \qquad
 \Omega(r)=\frac{2R^2}{R^2-r^2}.
\label{eq:Omega}
\end{equation}
In two dimensions the conformal factors cancel inside the divergence defining
the scalar Laplacian:
\begin{equation}
 \boxed{
 \Delta_{\HH^2_R}
 =\Omega^{-2}\Delta_{\RR^2}
 =\frac{(R^2-r^2)^2}{4R^4}\Delta_{\RR^2}.
 }
\label{eq:laplace-identity}
\end{equation}

\begin{proposition}[Hyperbolic Helmholtz equation as a coupling transform]
\label{prop:stackel}
The hyperbolic eigenvalue equation
\begin{equation}
 -\Delta_{\HH^2_R}\psi=\lambda\psi
\label{eq:hyperbolic-eigen}
\end{equation}
 is equivalent, in disk coordinates, to
\begin{equation}
 \boxed{
 \left[
 -\Delta_{\RR^2}
 -\lambda\frac{4R^4}{(R^2-r^2)^2}
 \right]\psi=0.
 }
\label{eq:flat-coupling}
\end{equation}
For the naive flat Hamiltonian
\begin{equation}
 \widehat H_{\alpha}^{\mathrm{flat}}
 =-\frac{\hbar^2}{2m}\Delta_{\RR^2}
  -\frac{\alpha}{(R^2-r^2)^2},
\label{eq:naive-flat-H}
\end{equation}
its zero-energy equation agrees with \eqref{eq:flat-coupling} when
\begin{equation}
 \boxed{
 \lambda=\frac{m\alpha}{2\hbar^2R^4}.
 }
\label{eq:lambda-alpha}
\end{equation}
\end{proposition}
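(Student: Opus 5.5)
The plan is to derive both claims directly from the two-dimensional conformal Laplacian identity \eqref{eq:laplace-identity}, which I take as given from the preceding discussion. It follows from the coordinate formula $\Delta_g\psi=|g|^{-1/2}\partial_i(|g|^{1/2}g^{ij}\partial_j\psi)$: for $g_{ij}=\Omega^2\delta_{ij}$ in dimension two we have $|g|^{1/2}=\Omega^2$ and $g^{ij}=\Omega^{-2}\delta^{ij}$. The factors inside the divergence therefore cancel, leaving $\Delta_{\HH^2_R}=\Omega^{-2}\Delta_{\RR^2}$.

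For the first claim, I substitute this identity into \eqref{eq:hyperbolic-eigen}, which gives
\[
-\Omega^{-2}\Delta_{\RR^2}\psi=\lambda\psi .
\]
Since $\Omega(r)=2R^2/(R^2-r^2)$ is smooth and strictly positive on $D_R$, I can multiply through by $\Omega^2=4R^4/(R^2-r^2)^2$ without creating or losing solutions. The result is exactly \eqref{eq:flat-coupling}, and the argument is reversible. Both equations are local differential equations on $D_R$, so the equivalence holds pointwise for $C^2$ functions, or distributionally, because multiplication by a smooth nonvanishing function is an isomorphism. I will state explicitly that no boundary conditions or Hilbert-space structures are being identified. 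The $L^2$ measures differ by $\Omega^2$, and this is exactly the distinction between a coupling transform and a unitary equivalence that the section intends to draw.

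For the second claim, I write the zero-energy equation $\widehat H_\alpha^{\mathrm{flat}}\psi=0$ and multiply by $2m/\hbar^2$, which gives
\[
\left[-\Delta_{\RR^2}-\frac{2m\alpha}{\hbar^2}\frac{1}{(R^2-r^2)^2}\right]\psi=0 .
\]
Matching the coefficient of $(R^2-r^2)^{-2}$ with the one in \eqref{eq:flat-coupling} requires $4R^4\lambda=2m\alpha/\hbar^2$, that is, $\lambda=m\alpha/(2\hbar^2R^4)$, which is \eqref{eq:lambda-alpha}. As a consistency check, I will compare this with the classical Casimir value \eqref{eq:casimir-shell}. With $\hbar^2\lambda\leftrightarrow 2H_{\mathrm{geo}}$ and $H_{\mathrm{geo}}=\mathcal C/(2R^2)$ on the zero-energy shell, we get $2H_{\mathrm{geo}}=m\alpha/(2R^4)$, which agrees.

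None of the steps is hard. The only point needing care is to restrict the statement to $D_R$, where $\Omega$ is positive and finite. On $E_R$ the same algebra holds with the exterior Jacobi metric, but \eqref{eq:flat-coupling} must not be read as including the singular circle $r=R$.
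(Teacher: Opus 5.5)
Your proposal is correct and follows essentially the same route as the paper: substitute $\Delta_{\HH^2_R}=\Omega^{-2}\Delta_{\RR^2}$, multiply by $\Omega^2$, then rescale the zero-energy flat equation by $2m/\hbar^2$ and match the coefficient of $(R^2-r^2)^{-2}$. Your added justification of the conformal identity, the remark on reversibility, and the Casimir consistency check ($\hbar^2\lambda=2H_{\mathrm{geo}}=m\alpha/(2R^4)$) are all correct refinements of the paper's argument.
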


\begin{proof}
Substitution of \eqref{eq:laplace-identity} into
\eqref{eq:hyperbolic-eigen} gives
\begin{equation}
 -\Delta_{\RR^2}\psi
 =\lambda\frac{4R^4}{(R^2-r^2)^2}\psi,
\end{equation}
which is \eqref{eq:flat-coupling}.  The zero-energy equation
$\widehat H_{\alpha}^{\mathrm{flat}}\psi=0$ gives
\begin{equation}
 -\Delta_{\RR^2}\psi
 =\frac{2m\alpha}{\hbar^2}
  \frac{1}{(R^2-r^2)^2}\psi.
\end{equation}
Comparison of the coefficients yields \eqref{eq:lambda-alpha}.
\end{proof}

\begin{remark}
Equation \eqref{eq:flat-coupling} is a St\"ackel transform, or
coupling-constant metamorphosis: the hyperbolic spectral parameter $\lambda$
becomes the coupling multiplying the flat singular potential
\cite{KalninsMillerPost,MillerPostWinternitz}.  It is not a unitary spectral equivalence between one
fixed hyperbolic operator and the family
\eqref{eq:naive-flat-H}.  Varying $\lambda$ changes the flat coupling.
\end{remark}

\subsection{The genuine unitary operator}
\label{sec:unitary}

The genuine operator equivalence requires the correct Hilbert-space measure.
Let
\begin{equation}
 d\mu_{\HH}=\Omega^2(r)\,dx\,dy,
\end{equation}
and define
\begin{equation}
 \mathcal H_{\HH}=L^2(D_R,d\mu_{\HH}),
 \qquad
 \mathcal H_E=L^2(D_R,dx\,dy).
\end{equation}
The map
\begin{equation}
 U:\mathcal H_{\HH}\longrightarrow\mathcal H_E,
 \qquad
 (U\psi)(\rr)=\Omega(r)\psi(\rr),
\label{eq:U}
\end{equation}
is unitary.

Let $A_{\HH}$ be the positive self-adjoint hyperbolic Laplacian,
\begin{equation}
 A_{\HH}=-\Delta_{\HH^2_R}.
\end{equation}
Because the hyperbolic plane is complete, its Laplace--Beltrami operator is
essentially self-adjoint on $C_c^\infty(D_R)$ \cite{Chernoff}.  We use its
unique self-adjoint closure and define
\begin{equation}
 A_E=UA_{\HH}U^{-1},
 \qquad
 \mathcal D(A_E)=U\mathcal D(A_{\HH}).
\end{equation}

\begin{theorem}[Exact unitary transform]
\label{thm:unitary}
The Euclidean-space representative of the hyperbolic Laplacian is
\begin{equation}
 \boxed{
 A_E=-\Omega^{-1}\Delta_{\RR^2}\Omega^{-1}.
 }
\label{eq:AE-compact}
\end{equation}
Consequently
\begin{equation}
 \sigma(A_E)=\sigma(A_{\HH}),
 \qquad
 e^{-itA_E}=Ue^{-itA_{\HH}}U^{-1}.
\label{eq:spectral-equivalence}
\end{equation}
Its quadratic form is
\begin{equation}
 \boxed{
 \langle f,A_Ef\rangle_{L^2(dxdy)}
 =\int_{D_R}
 \left|\nabla\left(\Omega^{-1}f\right)\right|^2dx\,dy.
 }
\label{eq:quadratic-form}
\end{equation}
\end{theorem}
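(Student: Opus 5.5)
The plan is to compute the conjugated operator first on the core $C_c^\infty(D_R)$ and then pass to closures. Take $f=U\psi=\Omega\psi$ with $\psi\in C_c^\infty(D_R)$. Since $\Omega$ is smooth and strictly positive on $D_R$, we have $f\in C_c^\infty(D_R)$ as well, and $U$ maps this space bijectively onto itself. By the Laplace identity \eqref{eq:laplace-identity},
\begin{equation*}
 A_E f
 =\Omega\,\bigl(-\Omega^{-2}\Delta_{\RR^2}\bigr)\bigl(\Omega^{-1}f\bigr)
 =-\Omega^{-1}\Delta_{\RR^2}\Omega^{-1}f ,
\end{equation*}
which is the formula \eqref{eq:AE-compact} on this core.

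The key preliminary is that $U$ is unitary. This holds because
$\int|\Omega\psi|^2\,dx\,dy=\int|\psi|^2\Omega^2\,dx\,dy=\|\psi\|^2_{\mathcal H_{\HH}}$, and $U$ is onto with inverse given by multiplication by $\Omega^{-1}$. Given this, conjugation transports everything:
\begin{itemize}
\item self-adjointness and essential self-adjointness on the image core;
\item the spectrum, $\sigma(UAU^{-1})=\sigma(A)$;
\item the functional calculus, $e^{-itUAU^{-1}}=Ue^{-itA}U^{-1}$, which follows by applying the spectral theorem to the unitarily equivalent spectral measures.
\end{itemize}
This gives \eqref{eq:spectral-equivalence} directly. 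Since $A_{\HH}$ is essentially self-adjoint on $C_c^\infty(D_R)$ (Chernoff, via completeness), $A_E$ is the closure of the differential expression $-\Omega^{-1}\Delta_{\RR^2}\Omega^{-1}$ on $C_c^\infty(D_R)$. So \eqref{eq:AE-compact} is read as ``the unique self-adjoint extension of this expression from $C_c^\infty$''.

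For the quadratic form I would again work first on $f\in C_c^\infty(D_R)$ and integrate by parts with no boundary terms, using that $\Omega$ is real:
\begin{equation*}
 \langle f,A_Ef\rangle
 =-\int \overline{\Omega^{-1}f}\,\Delta_{\RR^2}\bigl(\Omega^{-1}f\bigr)\,dx\,dy
 =\int\bigl|\nabla(\Omega^{-1}f)\bigr|^2dx\,dy .
\end{equation*}
Equivalently, this is the conformal invariance of the Dirichlet integral in two dimensions, $\int|\nabla_{\HH}\psi|^2d\mu_{\HH}=\int|\nabla\psi|^2\,dx\,dy$. I then extend to the form domain $U\mathcal D(A_{\HH}^{1/2})$ by closure. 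This works because $C_c^\infty(D_R)$ is a form core: it is an operator core for $A_{\HH}$, hence a form core.

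I expect the main obstacle to be interpretive rather than computational: making precise in what sense \eqref{eq:AE-compact} holds on the full domain $U\mathcal D(A_{\HH})$. There, $\Omega^{-1}f$ is only in $H^2_{\mathrm{loc}}$ and need not vanish at $r=R$. I would handle this by stating the identity distributionally, which is valid by local elliptic regularity, and otherwise relying on the closure argument above. In particular, I would not try to characterize the domain via boundary conditions at $r=R$, since essential self-adjointness makes any such characterization unnecessary.
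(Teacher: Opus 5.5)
Your proposal is correct and follows the paper's proof: conjugate $-\Omega^{-2}\Delta_{\RR^2}$ by multiplication with $\Omega$, invoke unitary equivalence for the spectrum and the propagator, and integrate by parts for the quadratic form. Your extra bookkeeping (that $U$ preserves $C_c^\infty(D_R)$, so $A_E$ is the closure of $-\Omega^{-1}\Delta_{\RR^2}\Omega^{-1}$ on that core, and that an operator core is a form core) makes explicit what the paper compresses into ``integration by parts on the transported form domain.''
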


\begin{proof}
If $f=U\psi=\Omega\psi$, then
\begin{align}
 A_Ef
 &=U\left(-\Omega^{-2}\Delta_{\RR^2}\right)U^{-1}f \\
 &=-\Omega^{-1}\Delta_{\RR^2}(\Omega^{-1}f),
\end{align}
which proves \eqref{eq:AE-compact}.  Unitary equivalence gives
\eqref{eq:spectral-equivalence}.  Integration by parts on the transported
form domain gives \eqref{eq:quadratic-form}.
\end{proof}

Since
\begin{equation}
 \Omega^{-1}=\frac{R^2-r^2}{2R^2},
\end{equation}
expansion of \eqref{eq:AE-compact} yields
\begin{equation}
 \boxed{
 \begin{aligned}
 A_Ef={}&
 -\frac{(R^2-r^2)^2}{4R^4}\Delta_{\RR^2}f
 +\frac{R^2-r^2}{R^4}\,\rr\cdot\nabla f \\
 &+\frac{R^2-r^2}{R^4}f.
 \end{aligned}
 }
\label{eq:AE-expanded}
\end{equation}
Thus $A_E$ contains both a variable principal coefficient and a first-order
derivative term.  It is not the naive Hamiltonian
\eqref{eq:naive-flat-H}.

\begin{corollary}[Spectrum]
\label{cor:spectrum}
For the curvature normalization \eqref{eq:poincare},
\begin{equation}
 \boxed{
 \sigma(A_{\HH})=\sigma(A_E)
 =\left[\frac{1}{4R^2},\infty\right],
 }
\label{eq:spectrum}
\end{equation}
and the spectrum is purely absolutely continuous.
\end{corollary}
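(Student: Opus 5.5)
Since $A_E=UA_{\HH}U^{-1}$ with $U$ unitary (\cref{thm:unitary}), it suffices to determine $\sigma(A_{\HH})$ for the Laplace--Beltrami operator of the metric \eqref{eq:poincare}, which has constant curvature $-1/R^2$. The plan is to reduce to the standard curvature $-1$ hyperbolic plane by scaling, then quote the classical spectral theory of $\HH^2$. If $g_1$ denotes the curvature $-1$ Poincar\'e metric $4\,d\bm w^2/(1-|\bm w|^2)^2$ on the unit disk, then the dilation $\bm w=\rr/R$ identifies $ds^2_{\HH^2_R}=R^2 g_1$. Indeed $4R^4\,d\rr^2/(R^2-r^2)^2=4R^2\,d\bm w^2/(1-w^2)^2$. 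Hence $\Delta_{\HH^2_R}=R^{-2}\Delta_{g_1}$, and the dilation induces a unitary map between the corresponding $L^2$ spaces (up to the constant factor $R$ in the measure). So $\sigma(A_{\HH})=R^{-2}\sigma(-\Delta_{g_1})$, and absolute continuity transfers as well.

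For $-\Delta_{g_1}$ the needed fact is $\sigma=[1/4,\infty)$, purely absolutely continuous. I would obtain this from the Helgason--Fourier transform on $\HH^2=SL(2,\RR)/SO(2)$ \cite{Helgason}. The functions $e^{(i\nu+1/2)\langle z,b\rangle}$, for $\nu\ge0$ and $b\in\partial\HH^2$, are generalized eigenfunctions with eigenvalue $1/4+\nu^2$. The Plancherel theorem, with density $|c(\nu)|^{-2}=\pi\nu\tanh(\pi\nu)$, gives a unitary map of $L^2(\HH^2)$ onto $L^2(\RR_+\times\partial\HH^2,|c(\nu)|^{-2}d\nu\,db)$. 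In this representation $-\Delta_{g_1}$ becomes multiplication by $1/4+\nu^2$. This multiplication operator has spectrum $[1/4,\infty)$, because the essential range of $\nu\mapsto 1/4+\nu^2$ on $[0,\infty)$ is that interval. Its spectrum is purely absolutely continuous because the spectral measure is absolutely continuous with respect to $d\nu$ and the map $\nu\mapsto1/4+\nu^2$ is a diffeomorphism there. Scaling then gives $[1/(4R^2),\infty)$; the closing bracket in \eqref{eq:spectrum} should be read as $\infty$ itself excluded.

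As an independent check of the bottom of the spectrum, I would bound $A_{\HH}\ge 1/(4R^2)$ using McKean's inequality $\lambda_0\ge (n-1)^2/4$ for simply connected manifolds with curvature $\le-1$. Alternatively one can use the ground-state substitution in geodesic polar coordinates $\rho$, with area element $R\sinh(\rho/R)$. To show the bound is sharp and that no gap occurs above it, I would build Weyl sequences from cutoffs of radial spherical functions $\varphi_\nu$, whose decay $e^{-\rho/(2R)}$ exactly balances the exponential volume growth.

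The main obstacle is the pure absolute continuity, in particular excluding embedded eigenvalues. An $L^2$ eigenfunction would expand into angular modes. Each mode satisfies a radial ODE whose solutions behave like $e^{(-1/2\pm i\nu)\rho/R}$ at infinity, and these are never square-integrable against the $\sinh$ measure. That rules out point spectrum, and the Plancherel route then settles singular continuous spectrum. I would rely on the Plancherel theorem rather than reprove it.
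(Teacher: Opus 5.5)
Your proposal is correct and follows the paper's route. You transfer the problem from $A_E$ to $A_{\HH}$ with the unitary $U$ of \cref{thm:unitary} and then invoke Helgason's spectral theory of the hyperbolic plane, which the paper simply cites; your dilation to curvature $-1$ and the Plancherel argument for pure absolute continuity spell out what that citation contains. One small precision: the dilation rescales the area measure by $R^2$, so the implementing unitary is $f\mapsto R\,f(R\,\cdot)$, which changes nothing in the conclusion.
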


\begin{proof}
The spectrum of the positive Laplacian on the hyperbolic plane of curvature
$-1/R^2$ is $[1/(4R^2),\infty)$ \cite{Helgason}.  The equality for $A_E$
follows from \cref{thm:unitary}.
\end{proof}

\begin{remark}[Scope of the quantum statement]
The results above concern an exact transformation of differential equations
and an exact unitary equivalence of self-adjoint Laplace operators.  We do not
construct ordered quantum analogues of $K_x$ and $K_y$ here.  A quantum
$SO(2,1)$ Casimir statement would additionally require an ordering choice,
domain invariance, and a proof of self-adjointness for the generators; it is
therefore not assumed in what follows.  Recent work on holomorphic
quantization in constant-curvature backgrounds, including the hyperbolic plane
with and without magnetic field, provides a complementary coadjoint-orbit and
$SL(2,\mathbb R)$ representation-theoretic framework for such a program
\cite{BykovKrivorol2026}.
\end{remark}

\subsection{The inverse-square boundary threshold}
\label{sec:critical}

The naive flat Hamiltonian \eqref{eq:naive-flat-H} is not unitarily equivalent
to the hyperbolic Laplacian.  Nevertheless, the St\"ackel relation exposes a
noteworthy boundary threshold.

Let
\begin{equation}
 \delta=R-r,
 \qquad \delta\downarrow0
\end{equation}
inside the disk.  Then
\begin{equation}
 \frac{\alpha}{(R^2-r^2)^2}
 \sim\frac{\alpha}{4R^2\delta^2}.
\label{eq:boundary-potential}
\end{equation}
The most singular part of the zero-energy equation is therefore
\begin{equation}
 -\frac{\hbar^2}{2m}\frac{d^2\psi}{d\delta^2}
 -\frac{\alpha}{4R^2\delta^2}\psi\simeq0.
\label{eq:boundary-ode}
\end{equation}

\begin{proposition}[Boundary indicial exponents]
\label{prop:indicial}
An ansatz $\psi\sim\delta^\beta$ gives
\begin{equation}
 \beta(\beta-1)
 +\frac{m\alpha}{2\hbar^2R^2}=0,
\end{equation}
so
\begin{equation}
 \boxed{
 \beta_\pm
 =\frac12\left(
 1\pm\sqrt{1-\frac{2m\alpha}{\hbar^2R^2}}
 \right).
 }
\label{eq:beta}
\end{equation}
The critical coupling is
\begin{equation}
 \boxed{
 \alpha_c=\frac{\hbar^2R^2}{2m}.
 }
\label{eq:alpha-critical}
\end{equation}
For $\alpha>\alpha_c$ the exponents are
\begin{equation}
 \beta_\pm=\frac12\pm i\sigma,
 \qquad
 \sigma=\frac12
 \sqrt{\frac{2m\alpha}{\hbar^2R^2}-1},
\label{eq:oscillatory-beta}
\end{equation}
producing logarithmic oscillations near the boundary.
\end{proposition}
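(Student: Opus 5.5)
We substitute the power ansatz $\psi\sim\delta^\beta$ directly into the model equation \eqref{eq:boundary-ode}. This equation keeps only the most singular part of the zero-energy problem, justified by the asymptotics \eqref{eq:boundary-potential}. We compute $d^2\delta^\beta/d\delta^2=\beta(\beta-1)\delta^{\beta-2}$. Since the potential term behaves like $\delta^{-2}\cdot\delta^\beta$, both terms scale as $\delta^{\beta-2}$, and cancelling the common power gives
\begin{equation*}
 -\frac{\hbar^2}{2m}\beta(\beta-1)-\frac{\alpha}{4R^2}=0 .
\end{equation*}
Multiplying by $-2m/\hbar^2$ produces exactly the stated indicial equation $\beta(\beta-1)+m\alpha/(2\hbar^2R^2)=0$.

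The quadratic formula applied to $\beta^2-\beta+c=0$, with $c=m\alpha/(2\hbar^2R^2)$, gives $\beta_\pm=\tfrac12(1\pm\sqrt{1-4c})$. Since $4c=2m\alpha/(\hbar^2R^2)$, this is \eqref{eq:beta}.

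The critical coupling is where the discriminant vanishes. The condition $1-2m\alpha/(\hbar^2R^2)=0$ gives $\alpha_c=\hbar^2R^2/(2m)$, as in \eqref{eq:alpha-critical}. At this value the two exponents merge into $\beta=\tfrac12$, and the second solution acquires a logarithm, $\delta^{1/2}\log\delta$. This is the standard repeated-root case of a Frobenius/Euler equation; it is worth noting, though it is not required by the statement.

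For $\alpha>\alpha_c$ we write
\begin{equation*}
 \sqrt{1-4c}=i\sqrt{4c-1}.
\end{equation*}
Then $\beta_\pm=\tfrac12\pm\tfrac{i}{2}\sqrt{2m\alpha/(\hbar^2R^2)-1}$, which is \eqref{eq:oscillatory-beta}. The logarithmic oscillations follow from
\begin{equation*}
 \delta^{1/2\pm i\sigma}=\delta^{1/2}e^{\pm i\sigma\log\delta}.
\end{equation*}
Real solutions are therefore $\delta^{1/2}\cos(\sigma\log\delta+\theta)$. These have infinitely many zeros accumulating at $\delta=0$, i.e.\ at $r=R$.

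The only point requiring care is the legitimacy of the reduction to \eqref{eq:boundary-ode}. Three corrections are dropped:
\begin{itemize}
 \item the first-order term $-(1/r)\,d/d\delta$ from the polar Laplacian, which is $O(\delta^{\beta-1})$;
 \item the angular term $-\ell^2/r^2$, which is $O(\delta^{\beta})$;
 \item the correction $\alpha/(R^2-r^2)^2-\alpha/(4R^2\delta^2)=O(\delta^{-1})$.
\end{itemize}
Each is subleading relative to $\delta^{\beta-2}$, so they affect only the higher Frobenius coefficients and not the exponents. Since the statement is posed at the level of the leading singular equation, I would state this order count briefly rather than construct full Frobenius series.
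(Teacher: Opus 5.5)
Your proof is correct and takes the same route as the paper, which substitutes $\delta^\beta$ into \eqref{eq:boundary-ode} and reads off the indicial equation, the roots, the critical coupling and the complex exponents. Your order count justifying the reduction to the model equation goes slightly beyond what the paper writes, and it is correct: the first-order radial term, the angular term and the $O(\delta^{-1})$ potential correction are all subleading, so $\delta=0$ is a regular singular point with the same indicial equation.
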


\begin{proof}
Substitution of $\delta^\beta$ into \eqref{eq:boundary-ode} gives the indicial
equation and hence \eqref{eq:beta}--\eqref{eq:oscillatory-beta}.
\end{proof}

At the critical coupling, the corresponding hyperbolic spectral parameter is
\begin{equation}
 \lambda_c
 =\frac{m\alpha_c}{2\hbar^2R^4}
 =\frac{1}{4R^2}.
\label{eq:lambda-critical}
\end{equation}
Comparison with \eqref{eq:spectrum} gives the exact coincidence
\begin{equation}
 \boxed{
 \alpha=\alpha_c
 \quad\Longleftrightarrow\quad
 \lambda=\inf\sigma(-\Delta_{\HH^2_R}).
 }
\label{eq:threshold-coincidence}
\end{equation}
Thus the bottom of the hyperbolic continuum maps to the transition between
real and oscillatory inverse-square boundary exponents in the naive flat
problem.

\begin{remark}[Domain choice versus the Hardy threshold]
The value $\alpha_c$ is the Hardy/oscillation threshold, not the point at which
an operator domain first becomes necessary.  The minimal realization of the
naive operator \eqref{eq:naive-flat-H} on $C_c^\infty(D_R)$ requires an
independent extension analysis for every attractive coupling $\alpha>0$;
this is an instance of the spectral theory of Schr\"odinger operators with
strongly singular potentials \cite{KalfEtAl}.  Indeed, in the one-dimensional
normal model both local branches are square
integrable at the finite boundary, including in the subcritical regime.  For
$0<\alpha<\alpha_c$ the exponents are real and the quadratic form admits the
usual subcritical Hardy control; at $\alpha=\alpha_c$ the model is critical;
and for $\alpha>\alpha_c$ the exponents become logarithmically oscillatory and
the Hardy lower bound is lost.  The supercritical problem then exhibits the
standard fall-to-the-boundary/renormalization behavior
\cite{Case,Gitman,VazquezZuazua}.  None of these choices affects the canonical
self-adjointness of $A_E$, which is the different operator
\eqref{eq:AE-expanded} with the domain transported from the complete
hyperbolic plane.
\end{remark}

\section{Magnetic completion: hyperbolic Landau motion from a radial field}
\label{sec:magnetic}

The nonmagnetic problem is already a complete hyperbolic geodesic system.
The natural next question is whether a magnetic deformation can preserve the
same hidden symmetry.  The answer selects a particular radial field and turns
the Newtonian motion into constant-field Landau dynamics on $\mathbb H^2$.
This gives a mechanical interpretation of the standard magnetic
circle--horocycle--hypercycle classification and, at the same time, a new
charge-reversing inversion duality.

\subsection{Magnetic moment map}

We formulate the magnetic extension using gauge-covariant momenta
$\bPi=(\Pi_x,\Pi_y)$ and magnetic Poisson brackets
\begin{equation}
 \{x_i,x_j\}=0,
 \qquad
 \{x_i,\Pi_j\}=\delta_{ij},
 \qquad
 \{\Pi_x,\Pi_y\}=B(r).
\label{eq:magnetic-PB}
\end{equation}
The Hamiltonian retains the form
\begin{equation}
 H_B=\frac{\bPi^2}{2m}
 -\frac{\alpha}{(R^2-r^2)^2}.
\label{eq:HB}
\end{equation}

Consider
\begin{equation}
 \boxed{
 B(r)=-\frac{Q}{(r^2-R^2)^2},
 \qquad
 G(r)=\frac{Q}{r^2-R^2}.
 }
\label{eq:BG}
\end{equation}
These obey
\begin{equation}
 G'(r)=2rB(r).
\label{eq:Gprime}
\end{equation}
Define the conserved angular momentum candidate
\begin{equation}
 \boxed{
 L_z^{(B)}
 =x\Pi_y-y\Pi_x+\frac{G(r)}{2},
 }
\label{eq:LB}
\end{equation}
and the magnetic Runge--Lenz-type vector
\begin{equation}
 \boxed{
 \bm K_B
 =\left(L_z^{(B)}+\frac{G}{2}\right)\rr
 +(\rr\cdot\bPi)\crossz{\rr}
 -R^2\crossz{\bPi}.
 }
\label{eq:KB}
\end{equation}

\begin{theorem}[Magnetic symmetry with a split central term]
\label{thm:magnetic-algebra}
For \eqref{eq:BG}--\eqref{eq:KB},
\begin{align}
 \{L_z^{(B)},H_B\}&=0,
\label{eq:LBH}\\
 \{\bm K_B,H_B\}&=4H_B\crossz{\rr},
\label{eq:KBH}\\
 \{K_{B,x},L_z^{(B)}\}&=-K_{B,y},
\label{eq:KBxL}\\
 \{K_{B,y},L_z^{(B)}\}&=K_{B,x},
\label{eq:KByL}\\
 \{K_{B,x},K_{B,y}\}&=-4R^2L_z^{(B)}-Q.
\label{eq:KBKB}
\end{align}
Thus $\bm K_B$ is conserved on $H_B=0$.  In the unshifted generators the
magnetic realization contains a central term.
\end{theorem}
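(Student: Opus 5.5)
The plan is to reduce every bracket in the theorem to a small set of elementary brackets and then compare with the nonmagnetic computation of \cref{prop:K-conformal,prop:so21}. From \eqref{eq:magnetic-PB} I will use the following elementary brackets for any radial function $f(r)$:
\begin{equation}
 \{f,\Pi_i\}=\frac{f'(r)}{r}\,x_i,
 \qquad
 \{\Pi_i,\Pi_j\}=\epsilon_{ij}B(r).
\end{equation}
I will also write $\ell=x\Pi_y-y\Pi_x$, so that $L_z^{(B)}=\ell+G/2$. With this notation,
\begin{equation}
 \bm K_B=\bm K_0+G\,\rr,
 \qquad
 \bm K_0=\ell\,\rr+(\rr\cdot\bPi)\crossz{\rr}-R^2\crossz{\bPi}.
\end{equation}
Here $\bm K_0$ is the formal expression \eqref{eq:K} with $\pp$ replaced by $\bPi$. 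Every bracket then splits into three pieces:
\begin{enumerate}
 \item the \emph{canonical} piece, computed exactly as in the nonmagnetic case;
 \item a \emph{curvature} piece, coming from each occurrence of $\{\Pi_x,\Pi_y\}=B$;
 \item a \emph{gauge} piece, coming from the explicit $G$ terms.
\end{enumerate}
The proof amounts to showing that the last two pieces cancel, or combine into the stated central term, once one uses \eqref{eq:Gprime} and the explicit forms \eqref{eq:BG}.

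For \eqref{eq:LBH} the argument is short. The potential is radial and $\ell$ generates rotations, so only the kinetic term contributes. The curvature piece is $\{\ell,\bPi^2/2m\}=-B(\rr\cdot\bPi)/m$. The gauge piece is $\{G/2,\bPi^2/2m\}=G'(\rr\cdot\bPi)/(2mr)$, which equals $+B(\rr\cdot\bPi)/m$ by \eqref{eq:Gprime}, so the two cancel.

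The same bookkeeping gives \eqref{eq:KBxL}--\eqref{eq:KByL}. I will first check that $L_z^{(B)}$ acts on $\rr$ and on $\bPi$ as an infinitesimal rotation. On $\rr$ this is immediate. On $\bPi$, the $B$ term from $\ell$ is cancelled by the $G'$ term from $G/2$, again by \eqref{eq:Gprime}. Since $\bm K_B$ is built covariantly from the vectors $\rr$, $\bPi$ and $\crossz{\cdot}$, with radial scalar coefficients, it then transforms as a vector. This gives the two relations with the same signs as in \cref{prop:so21}.

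For \eqref{eq:KBH}, the canonical piece reproduces $4H_B\crossz{\rr}$, because the computation behind \cref{prop:K-conformal} only uses $\{x_i,\Pi_j\}=\delta_{ij}$ together with radial derivatives of the potential. What remains is a term linear in $\bPi$ with three sources:
\begin{enumerate}
 \item the curvature terms from $\{\bm K_0,\bPi^2\}$, with coefficients $B\,\ell$-type and $B(r^2+R^2)$-type;
 \item the gauge term $\{G\rr,\bPi^2/2m\}=\bigl(G\bPi+G'(\rr\cdot\bPi)\rr/r\bigr)/m$;
 \item the term from $\ell\,\rr$, which now carries $G$.
\end{enumerate}
I will collect the remainder as a combination of $\bPi$, $\crossz{\bPi}$, $(\rr\cdot\bPi)\rr$ and $(\rr\cdot\bPi)\crossz{\rr}$, and check that each coefficient vanishes. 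I expect the vanishing to use only $G'=2rB$ together with the algebraic relation $G=-B\,(r^2-R^2)$, which holds for \eqref{eq:BG}.

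The main obstacle is \eqref{eq:KBKB}. The canonical piece gives $-4R^2\ell$. The target is
\begin{equation}
 -4R^2L_z^{(B)}-Q=-4R^2\ell-2R^2G-Q=-4R^2\ell-Q\,\frac{r^2+R^2}{r^2-R^2},
\end{equation}
so the curvature and gauge pieces together must equal $-Q(r^2+R^2)/(r^2-R^2)$, a function of $r$ alone. The curvature piece is $B$ times the $2\times2$ determinant of the $\bPi$-coefficients of $\bm K_0$. I expect this to be $B\,(r^2+R^2)^2$ up to sign, by the same algebra that yields the Casimir identity \eqref{eq:casimir-geodesic}. The gauge pieces are the cross terms $\{K_{0,x},Gy\}+\{Gx,K_{0,y}\}$, which involve $G$ and $G'$; the term $\{Gx,Gy\}$ vanishes. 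All $\bPi$-dependence should cancel by the same identities as in the previous step. I will then substitute \eqref{eq:BG} and verify the identity
\begin{equation}
 -Q\frac{(r^2+R^2)^2}{(r^2-R^2)^2}+(\text{gauge terms})=-Q\frac{r^2+R^2}{r^2-R^2}.
\end{equation}
If the sign bookkeeping becomes unwieldy, I will check the final scalar identity at a convenient point, say $\rr=(x,0)$, where $\crossz{\rr}$ is simple; rotational covariance from the previous step then extends it to all $\rr$.

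Finally, conservation of $\bm K_B$ on $H_B=0$ is immediate from \eqref{eq:KBH}. The presence of $Q$ in \eqref{eq:KBKB} is precisely the central term in the unshifted generators, to be removed later by shifting $L_z^{(B)}$ by the constant $Q/(4R^2)$.
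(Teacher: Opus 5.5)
Your strategy is the paper's own: the paper proves the theorem by direct evaluation of the magnetic brackets \eqref{eq:magnetic-PB}, with \eqref{eq:Gprime} cancelling the radial magnetic terms. Your canonical/curvature/gauge splitting is a sound way to organize that calculation, and it goes through as you describe in three of the four cases.
\begin{itemize}
 \item For \eqref{eq:LBH} the curvature and gauge terms cancel as you say.
 \item For \eqref{eq:KBxL}--\eqref{eq:KByL}, $L_z^{(B)}$ acts on $x,y,\Pi_x,\Pi_y$ exactly as the canonical $L_z$ acts on $x,y,p_x,p_y$, so your covariance argument works.
 \item For \eqref{eq:KBH} the noncanonical remainder is exactly $\frac1m\bigl(G+B(r^2-R^2)\bigr)\bPi=0$. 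Your third ``source'' there is a double count: once you write $\bm K_B=\bm K_0+G\rr$, all of the $G$ in $(L_z^{(B)}+G/2)\rr$ already sits in the gauge term.
\end{itemize}

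The step that fails as written is the one you flag as the main obstacle. The curvature piece is indeed $B\det M$, with $M$ the matrix of $\bPi$-coefficients of $\bm K_0$, but the determinant is not $\pm(r^2+R^2)^2$:
\begin{equation*}
 M=\begin{pmatrix}-2xy & R^2+x^2-y^2\\ x^2-y^2-R^2 & 2xy\end{pmatrix},
 \qquad
 \det M=R^4-r^4 .
\end{equation*}
Your Casimir heuristic keeps only one eigenvalue. The identity $|M\bPi|^2=4R^2\bigl((\crossz{\rr})\cdot\bPi\bigr)^2+(R^2-r^2)^2|\bPi|^2$ shows that $M^{T}M$ has eigenvalues $(R^2+r^2)^2$ along $\crossz{\rr}$ and $(R^2-r^2)^2$ along $\rr$. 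So the curvature piece is $B(R^4-r^4)=+Q(r^2+R^2)/(r^2-R^2)$.

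The gauge cross terms are
\begin{equation*}
 \{K_{0,x},Gy\}+\{Gx,K_{0,y}\}=-2R^2G+2Br^2(r^2-R^2)=-2Q\,\frac{r^2+R^2}{r^2-R^2}.
\end{equation*}
Neither noncanonical piece actually contains any $\bPi$-dependence. With these true gauge terms, the identity you propose to verify is false for either sign of your predicted curvature term.

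With the correct determinant the bookkeeping closes cleanly. Curvature plus gauge equals $B(R^2-r^2)^2-2R^2G=-Q-2R^2G$, hence
\begin{equation*}
 \{K_{B,x},K_{B,y}\}=-4R^2\ell-2R^2G-Q=-4R^2L_z^{(B)}-Q.
\end{equation*}
The central term is exactly the constant $B(r)(R^2-r^2)^2=4R^4b_{\HH}$, i.e.\ the constancy of the intrinsic field in \cref{prop:landau-realization}.

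Your fallback of checking at $\rr=(x,0)$ is legitimate. The residual is a rotation-invariant function of position alone, hence radial. That check would have caught the error, since there $\det M=R^4-x^4$.
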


\begin{proof}
Equation \eqref{eq:Gprime} guarantees cancellation of the radial magnetic
terms in $\{L_z^{(B)},H_B\}$.  Substituting
\eqref{eq:BG}--\eqref{eq:KB} into the magnetic brackets
\eqref{eq:magnetic-PB} then gives
\eqref{eq:KBH}--\eqref{eq:KBKB} by direct calculation.
\end{proof}

The shifted generator
\begin{equation}
 \widetilde L_z
 =L_z^{(B)}+\frac{Q}{4R^2}
\label{eq:Lshift}
\end{equation}
restores the standard algebra.  Namely,
\begin{equation}
 \widetilde J_0=\widetilde L_z,
 \qquad
 \widetilde J_1=\frac{K_{B,x}}{2R},
 \qquad
 \widetilde J_2=\frac{K_{B,y}}{2R}
\end{equation}
satisfy the same $\mathfrak{so}(2,1)$ brackets as
\eqref{eq:so21}.  The central term is therefore split (algebraically
trivial): it is removed by an affine shift of the compact generator rather
than defining a new nonisomorphic Lie algebra.  This is compatible with the
central-extension language used for the spherical magnetic predecessor
\cite{BhandariCrescimanno}.  There, too, the quantum algebra is described as
unchanged up to a shift by a central element; here we display the affine shift
explicitly because the resulting standard $\mathfrak{so}(2,1)$ Casimir
controls the circle--horocycle--hypercycle transition.

\subsection{Charge-reversing inversion}

The nonmagnetic inversion preserved the moment map.  In the magnetic problem
orientation reversal forces the field parameter to change sign.  To state the
result precisely, display the dependence on the magnetic parameter explicitly.  Write
\begin{equation}
 B_Q(r)=-\frac{Q}{(r^2-R^2)^2},
 \qquad
 G_Q(r)=\frac{Q}{r^2-R^2},
\label{eq:BQGQ-explicit}
\end{equation}
and denote the corresponding quantities by
$L_{z,Q}^{(B)}$, $\bm K_{B,Q}$,
$\widetilde L_{z,Q}$, and $\mathcal C_Q$.

\begin{theorem}[Charge-reversing magnetic inversion duality]
\label{thm:magnetic-inversion}
Let $\Phi$ be the cotangent lift of circular inversion,
\begin{equation}
 \rr'=\frac{R^2}{r^2}\rr,
 \qquad
 \bPi'=\frac{r^2}{R^2}
 \left(
  \bPi-2\frac{\rr\cdot\bPi}{r^2}\rr
 \right).
\label{eq:magnetic-inversion-map}
\end{equation}
Then inversion exchanges the magnetic systems with parameters $Q$ and $-Q$.
More precisely,
\begin{equation}
 \boxed{
 \Phi^*\!\left(B_Q(r')\,dx'\wedge dy'\right)
 =B_{-Q}(r)\,dx\wedge dy,
 }
\label{eq:magnetic-twoform-inversion}
\end{equation}
so $\Phi$ is a symplectomorphism from the magnetic phase space with parameter
$-Q$ to that with parameter $Q$.  Moreover,
\begin{equation}
 \boxed{
 H_{B,Q}\circ\Phi
 =\frac{r^4}{R^4}H_{B,-Q},
 }
\label{eq:magnetic-H-inversion}
\end{equation}
and the shifted moment-map components obey
\begin{equation}
 \boxed{
 \widetilde L_{z,Q}\circ\Phi=\widetilde L_{z,-Q},
 \qquad
 \bm K_{B,Q}\circ\Phi=\bm K_{B,-Q}.
 }
\label{eq:magnetic-momentmap-inversion}
\end{equation}
Consequently,
\begin{equation}
 \boxed{
 \mathcal C_Q\circ\Phi=\mathcal C_{-Q}.
 }
\label{eq:magnetic-casimir-inversion}
\end{equation}
In particular, $\Phi$ exchanges exterior and punctured-interior zero-energy
magnetic trajectories while reversing the sign of $Q$.  If $z'=\Phi(z)$,
then on $H_{B,-Q}(z)=0$,
\begin{equation}
 \boxed{
 \Phi_*X_{H_{B,-Q}}(z)
 =\frac{r'^4}{R^4}X_{H_{B,Q}}(z')
 =\frac{R^4}{r^4}X_{H_{B,Q}}(z'),
 }
\label{eq:magnetic-flow-inversion}
\end{equation}
so the two flows are intertwined up to a positive reparametrization of time.
\end{theorem}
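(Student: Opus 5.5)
The plan is to reduce everything to the nonmagnetic \cref{thm:inversion}, then track the only genuinely new ingredients: the field $B_Q$ and the function $G_Q$. I will realize the magnetic brackets \eqref{eq:magnetic-PB} by the twisted symplectic form
\[
\omega_Q=\sum_i d\Pi_i\wedge dx_i+B_Q(r)\,dx\wedge dy
\]
(with the sign fixed so that $\{\Pi_x,\Pi_y\}=B_Q$). The map \eqref{eq:magnetic-inversion-map} is formally the cotangent lift of \cref{sec:inversion} with $\pp$ replaced by $\bPi$. It therefore preserves the canonical part $\sum d\Pi_i\wedge dx_i$, and only the magnetic term needs computing. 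Inversion reverses orientation, with Jacobian determinant $-R^4/r^4$. Using $r'^2-R^2=R^2(R^2-r^2)/r^2$, one gets
\[
B_Q(r')=-\frac{Q\,r^4}{R^4(r^2-R^2)^2},
\]
and multiplying by $-R^4/r^4$ gives $Q/(r^2-R^2)^2=B_{-Q}(r)$. This is \eqref{eq:magnetic-twoform-inversion}, so $\Phi^*\omega_Q=\omega_{-Q}$.

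Next I will treat the Hamiltonian and the moment map. Identity \eqref{eq:magnetic-H-inversion} is literally the computation in the proof of \cref{thm:inversion}: $\bPi'^2=(r^4/R^4)\bPi^2$, together with the transformation of $(R^2-r^2)^2$. Neither step involves $Q$. For the moment map, write
\[
L^{(B)}_{z,Q}=L_z(\rr,\bPi)+\tfrac12 G_Q,
\qquad
\bm K_{B,Q}=\bm K(\rr,\bPi)+G_Q(r)\,\rr,
\]
where $L_z$ and $\bm K$ are the canonical expressions \eqref{eq:Lz}, \eqref{eq:K} evaluated at $\bPi$. By \eqref{eq:LK-inversion} these canonical parts are invariant. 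It remains to check two scalar identities.
\begin{itemize}
\item The first is $G_Q(r')+G_Q(r)=-Q/R^2$. It follows from $G_Q(r')=-Q r^2/\bigl(R^2(r^2-R^2)\bigr)$, and it is exactly what converts $\tfrac12G_Q(r')+Q/(4R^2)$ into $\tfrac12G_{-Q}(r)-Q/(4R^2)$, i.e.\ $\widetilde L_{z,Q}\circ\Phi=\widetilde L_{z,-Q}$.
\item The second is $G_Q(r')\,\rr'=\bigl(G_Q(r')R^2/r^2\bigr)\rr=-Q\rr/(r^2-R^2)=G_{-Q}(r)\rr$, which gives $\bm K_{B,Q}\circ\Phi=\bm K_{B,-Q}$.
\end{itemize}
The Casimir identity \eqref{eq:magnetic-casimir-inversion} is then immediate, since $\mathcal C_Q$ is the polynomial $\bm K_B^2/(4R^2)-\widetilde L_z^2$ in these components.

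For the flow statement I will repeat the argument of \cref{cor:inversion-flow}, now with two different symplectic structures. Since $\Phi^*\omega_Q=\omega_{-Q}$, the Hamiltonian vector field of $H_{B,Q}\circ\Phi$ with respect to $\omega_{-Q}$ is pushed forward to $X_{H_{B,Q}}$. By \eqref{eq:magnetic-H-inversion}, $H_{B,Q}\circ\Phi=fH_{B,-Q}$ with $f=r^4/R^4$. The Leibniz rule gives $X_{fH}=fX_H$ on $H=0$. Hence
\[
\Phi_*X_{H_{B,-Q}}(z)=f(z)^{-1}X_{H_{B,Q}}(z')=\frac{R^4}{r^4}X_{H_{B,Q}}(z')=\frac{r'^4}{R^4}X_{H_{B,Q}}(z'),
\]
where the last equality uses $rr'=R^2$. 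Positivity of the factor yields the time reparametrization. The exchange of exterior and punctured-interior trajectories follows because $\Phi$ swaps $E_R$ and $D_R\setminus\{0\}$ and preserves the zero level sets.

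The main obstacle I expect is bookkeeping of sign and orientation conventions rather than any hard estimate. Two points need care. First, one must fix the sign convention linking $\omega_Q$ to $\{\Pi_x,\Pi_y\}=B$ and confirm that the cotangent-lift formula, applied to $\bPi$, really preserves the canonical part of the twisted form. Second, the orientation reversal of inversion must produce the charge flip rather than be absorbed elsewhere. I will verify both by a direct check of $\{\Pi'_x,\Pi'_y\}_{-Q}=B_Q(r')$ at one convenient point, for instance on the $x$-axis.
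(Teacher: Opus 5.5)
Your proposal is correct and follows the paper's proof essentially step by step: pullback of $B_Q\,dx\wedge dy$ using the orientation-reversing Jacobian $-R^4/r^4$, the $Q$-independent scaling of $H$, scalar identities for $G_Q$ under inversion, and the Leibniz-rule argument on the zero shell for the flow. The only difference is cosmetic. You write $\bm K_{B,Q}=\bm K(\rr,\bPi)+G_Q\rr$ and $L^{(B)}_{z,Q}=\ell+\tfrac12 G_Q$ so that \eqref{eq:LK-inversion} can be reused directly, whereas the paper substitutes $\ell'=\ell$, $s'=-s$ and the momentum decomposition into \eqref{eq:KB} by hand.
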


\begin{proof}
Circular inversion reverses orientation and has Jacobian determinant
\begin{equation}
 \det D\mathcal I=-\frac{R^4}{r^4}.
\end{equation}
Since $r'=R^2/r$,
\begin{equation}
 B_Q(r')
 =-\frac{Qr^4}{R^4(R^2-r^2)^2}.
\end{equation}
Multiplying by the transformed area form gives
\begin{equation}
 \Phi^*\!\left(B_Q(r')\,dx'\wedge dy'\right)
 =\frac{Q}{(R^2-r^2)^2}\,dx\wedge dy
 =B_{-Q}(r)\,dx\wedge dy,
\end{equation}
which proves \eqref{eq:magnetic-twoform-inversion}.  The kinetic term and
scalar potential transform exactly as in the nonmagnetic case,
\begin{equation}
 \bPi'^2=\frac{r^4}{R^4}\bPi^2,
 \qquad
 \frac{1}{(R^2-r'^2)^2}
 =\frac{r^4}{R^4}\frac{1}{(R^2-r^2)^2},
\end{equation}
and hence \eqref{eq:magnetic-H-inversion} follows.

Set
\begin{equation}
 \ell=x\Pi_y-y\Pi_x,
 \qquad
 s=\rr\cdot\bPi.
\end{equation}
Under \eqref{eq:magnetic-inversion-map},
\begin{equation}
 \ell'=\ell,
 \qquad
 s'=-s,
 \qquad
 G_Q(r')=\frac{r^2}{R^2}G_{-Q}(r).
\label{eq:magnetic-inversion-elementary}
\end{equation}
Substitution into
$\widetilde L_{z,Q}=\ell+G_Q/2+Q/(4R^2)$ gives
\begin{equation}
 \widetilde L_{z,Q}(\rr',\bPi')
 =\widetilde L_{z,-Q}(\rr,\bPi).
\end{equation}
For the vector generator, use the two-dimensional decomposition
\begin{equation}
 \bPi
 =\frac{s}{r^2}\rr
 +\frac{\ell}{r^2}\crossz{\rr}.
\label{eq:magnetic-momentum-decomposition}
\end{equation}
A direct substitution of
\cref{eq:magnetic-inversion-map,eq:magnetic-inversion-elementary} into
\eqref{eq:KB}, followed by
\eqref{eq:magnetic-momentum-decomposition}, yields
\begin{equation}
 \bm K_{B,Q}(\rr',\bPi')
 =\bm K_{B,-Q}(\rr,\bPi).
\end{equation}
This proves \eqref{eq:magnetic-momentmap-inversion};
\eqref{eq:magnetic-casimir-inversion} then follows immediately from the
quadratic definition of the shifted Casimir.

Finally, $\Phi$ is an involutive symplectomorphism between the $-Q$ and $Q$
magnetic phase spaces.  Therefore
\begin{equation}
 \Phi_*X_{H_{B,-Q}}
 =X_{H_{B,-Q}\circ\Phi}
\end{equation}
when the right-hand side is evaluated on the $Q$ phase space.  Applying
\eqref{eq:magnetic-H-inversion} with $Q$ replaced by $-Q$ gives a positive
factor $r'^4/R^4$; the term proportional to the Hamiltonian vanishes on the
zero-energy shell, proving \eqref{eq:magnetic-flow-inversion}.
\end{proof}

\begin{remark}
The sign reversal of $Q$ is forced geometrically by orientation reversal:
intrinsically the constant hyperbolic field is
$b_{\HH}=-Q/(4R^4)$, so circular inversion exchanges the two field
orientations.  Because \eqref{eq:magnetic-casimir-inversion} depends only on
the shifted moment map, the circle--horocycle--hypercycle regime is unchanged
under $Q\leftrightarrow -Q$.
\end{remark}

\subsection{Casimir and orbit equation}

The shifted algebra supplies the invariant that will classify the orbit
geometry.

\begin{proposition}[Magnetic Casimir]
\label{prop:magnetic-casimir}
The shifted quadratic Casimir is
\begin{equation}
 \mathcal C_Q
 =\frac{\bm K_B^2}{4R^2}
 -\left(L_z^{(B)}+\frac{Q}{4R^2}\right)^2.
\label{eq:CQ}
\end{equation}
It satisfies
\begin{equation}
 \boxed{
 \mathcal C_Q
 =\frac{m(R^2-r^2)^2}{2R^2}H_B
 +\frac{m\alpha}{2R^2}
 -\frac{Q^2}{16R^4}.
 }
\label{eq:CQidentity}
\end{equation}
On $H_B=0$,
\begin{equation}
 \boxed{
 \mathcal C_Q
 =\frac{m\alpha}{2R^2}
 -\frac{Q^2}{16R^4}.
 }
\label{eq:CQshell}
\end{equation}
\end{proposition}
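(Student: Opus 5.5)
The plan is a purely algebraic substitution in a polar-adapted frame, reusing the nonmagnetic identity of \cref{prop:casimir} as the $Q=0$ backbone. Set $\ell=x\Pi_y-y\Pi_x$ and $s=\rr\cdot\bPi$, as in the proof of \cref{thm:magnetic-inversion}. Then $L_z^{(B)}=\ell+G/2$, so the coefficient of $\rr$ in \eqref{eq:KB} is $\ell+G$.

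First, I will express $\bm K_B$ in the orthogonal frame $\{\rr,\crossz{\rr}\}$. Using the decomposition \eqref{eq:magnetic-momentum-decomposition} and $\crossz{(\crossz{\rr})}=-\rr$, we get $\crossz{\bPi}=(s/r^2)\crossz{\rr}-(\ell/r^2)\rr$. Hence
\begin{equation*}
 \bm K_B=\frac{(r^2+R^2)\ell+r^2G}{r^2}\,\rr+\frac{(r^2-R^2)s}{r^2}\,\crossz{\rr},
\end{equation*}
and since both frame vectors have length $r$,
\begin{equation*}
 \bm K_B^2=\frac{\bigl[(r^2+R^2)\ell+r^2G\bigr]^2+(r^2-R^2)^2s^2}{r^2}.
\end{equation*}

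Second, I will expand \eqref{eq:CQ} and sort the terms by powers of $G$. The key observation is the identity
\begin{equation*}
 \frac{G}{2}+\frac{Q}{4R^2}=\frac{(r^2+R^2)\,G}{4R^2},
\end{equation*}
which follows directly from $G=Q/(r^2-R^2)$. The $G$-independent terms are
\begin{equation*}
 \frac{(r^2+R^2)^2\ell^2+(r^2-R^2)^2s^2}{4R^2r^2}-\ell^2=\frac{(R^2-r^2)^2(\ell^2+s^2)}{4R^2r^2}=\frac{(R^2-r^2)^2\bPi^2}{4R^2}.
\end{equation*}
This is exactly the nonmagnetic computation behind \eqref{eq:casimir-geodesic}, now with $\bPi$ in place of $\pp$. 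The terms linear in $\ell G$ are $2(r^2+R^2)\ell G/(4R^2)$ from $\bm K_B^2/(4R^2)$ and $-2\ell(r^2+R^2)G/(4R^2)$ from the shifted square, so they cancel by the key identity. The $G^2$ terms combine to
\begin{equation*}
 \frac{G^2\bigl[4R^2r^2-(r^2+R^2)^2\bigr]}{16R^4}=-\frac{G^2(r^2-R^2)^2}{16R^4}=-\frac{Q^2}{16R^4},
\end{equation*}
a constant.

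Finally, I will eliminate $\bPi^2$ using \eqref{eq:HB}, namely $\bPi^2=2mH_B+2m\alpha/(R^2-r^2)^2$. This turns $(R^2-r^2)^2\bPi^2/(4R^2)$ into $m(R^2-r^2)^2H_B/(2R^2)+m\alpha/(2R^2)$, which gives \eqref{eq:CQidentity}. Setting $H_B=0$ then gives \eqref{eq:CQshell}.

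The only delicate step is the cancellation of the $\ell G$ cross terms. It depends on the precise shift $Q/(4R^2)$ in \eqref{eq:Lshift} together with the doubled $G/2$ in \eqref{eq:KB}. I would verify the key identity first, since an error in the shift would leave a non-constant $\ell$-dependent remainder.
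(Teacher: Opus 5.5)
Your proposal is correct and is essentially the paper's argument: the paper only asserts that direct expansion of \eqref{eq:CQ} using \eqref{eq:LB} and \eqref{eq:KB} gives \eqref{eq:CQidentity}. Your $\{\rr,\crossz{\rr}\}$-frame bookkeeping, with the identity $G/2+Q/(4R^2)=(r^2+R^2)G/(4R^2)$, carries out that expansion explicitly and correctly, including the $G^0$, $\ell G$, and $G^2$ pieces. The only cosmetic point is that the frame decomposition divides by $r^2$, so the origin should be covered by continuity of both sides or by a one-line direct check at $\rr=0$.
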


\begin{proof}
Direct expansion of \eqref{eq:CQ} using \eqref{eq:LB} and
\eqref{eq:KB} yields \eqref{eq:CQidentity}; restriction to the zero-energy
shell gives \eqref{eq:CQshell}.
\end{proof}

\begin{proposition}[Magnetic orbit equation]
\label{prop:magnetic-orbit}
The magnetic generators obey
\begin{equation}
 \boxed{
 \bm K_B\cdot\rr
 =L_z^{(B)}(r^2+R^2)+\frac{Q}{2}.
 }
\label{eq:magnetic-Kdotr}
\end{equation}
For $L_z^{(B)}\neq0$, the zero-energy trajectories therefore lie on circles
\begin{equation}
 r^2
 -\frac{\bm K_B}{L_z^{(B)}}\cdot\rr
 +R^2+\frac{Q}{2L_z^{(B)}}=0.
\label{eq:magnetic-circle}
\end{equation}
Unlike the $Q=0$ geodesics, these circles are not generally orthogonal to
$r=R$.
\end{proposition}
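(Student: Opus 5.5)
We mirror the proof of \cref{prop:orbit-equation}: take the scalar product of \eqref{eq:KB} with $\rr$ term by term. Since $(\crossz{\rr})\cdot\rr=0$ and $(\crossz{\bPi})\cdot\rr=-(x\Pi_y-y\Pi_x)=-\ell$, we obtain
\begin{equation*}
 \bm K_B\cdot\rr=\Bigl(L_z^{(B)}+\tfrac{G}{2}\Bigr)r^2+R^2\ell .
\end{equation*}
By \eqref{eq:LB}, $\ell=L_z^{(B)}-G/2$, so the right-hand side becomes $L_z^{(B)}(r^2+R^2)+\tfrac{G}{2}(r^2-R^2)$. With $G=Q/(r^2-R^2)$ from \eqref{eq:BG}, the last term collapses to the constant $Q/2$. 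This gives \eqref{eq:magnetic-Kdotr}. The only point requiring care is the placement of the two $G/2$ contributions, one from $L_z^{(B)}$ and one explicit in $\bm K_B$, so that they combine into the factor $r^2-R^2$ and cancel the pole of $G$. This cancellation is exactly why the combination $L_z^{(B)}+G/2$ was built into \eqref{eq:KB}.

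For the orbit equation, we invoke \cref{thm:magnetic-algebra}. By \eqref{eq:LBH}, $L_z^{(B)}$ is conserved. By \eqref{eq:KBH}, $\bm K_B$ is conserved along the flow on $H_B=0$, since $\{\bm K_B,H_B\}$ is proportional to $H_B$, which vanishes on the invariant zero-energy shell. Along a zero-energy trajectory with $L_z^{(B)}\neq0$, we may therefore divide \eqref{eq:magnetic-Kdotr} by the constant $L_z^{(B)}$ and rearrange to obtain \eqref{eq:magnetic-circle}. This is the equation of a fixed Euclidean circle (possibly degenerate or empty), with center $\bm a_B=\bm K_B/(2L_z^{(B)})$ and
\begin{equation*}
 \rho_B^2=|\bm a_B|^2-R^2-\frac{Q}{2L_z^{(B)}} .
\end{equation*}

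For the non-orthogonality claim, we apply the criterion used in \cref{cor:orthogonal}: the circle meets $r=R$ orthogonally iff $|\bm a_B|^2=R^2+\rho_B^2$. Here
\begin{equation*}
 |\bm a_B|^2-R^2-\rho_B^2=\frac{Q}{2L_z^{(B)}},
\end{equation*}
which is nonzero whenever $Q\neq0$, so orthogonality fails generically. This step is immediate once the center and radius are read off by completing the square. The main (mild) obstacle is only the bookkeeping in the first step; the rest is a direct transcription of the nonmagnetic argument.
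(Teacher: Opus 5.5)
Your proposal is correct and follows the paper's own proof: take the scalar product of \eqref{eq:KB} with $\rr$, use $(\crossz{\bPi})\cdot\rr=-(x\Pi_y-y\Pi_x)$ and \eqref{eq:LB} so that the two $G/2$ contributions combine into $\tfrac{G}{2}(r^2-R^2)=Q/2$, then divide by the conserved $L_z^{(B)}$. Your explicit appeal to \eqref{eq:LBH}--\eqref{eq:KBH} for conservation, and your check $|\bm a_B|^2-R^2-\rho_B^2=Q/(2L_z^{(B)})$ for non-orthogonality, only spell out steps the paper leaves implicit.
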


\begin{proof}
Take the scalar product of \eqref{eq:KB} with $\rr$ and use
$(\crossz{\bPi})\cdot\rr=-(x\Pi_y-y\Pi_x)$ together with
\eqref{eq:LB}.
\end{proof}

\subsection{Intrinsic hyperbolic Landau interpretation}

The previous formulas are algebraic.  The next proposition explains their
intrinsic geometric meaning: the radial Euclidean field is constant when
measured against hyperbolic area.

\begin{proposition}[Exact hyperbolic Landau realization]
\label{prop:landau-realization}
The magnetic two-form is a constant multiple of the hyperbolic area form:
\begin{equation}
 \mathcal B=B(r)\,dx\wedge dy
 =b_{\HH}\,d\mu_{\HH},
 \qquad
 \boxed{b_{\HH}=-\frac{Q}{4R^4}}.
\label{eq:constant-hyperbolic-B}
\end{equation}
Moreover, define
\begin{equation}
 \mathcal H_{\HH,B}
 =\frac{(R^2-r^2)^2}{4R^4}H_B
  +\frac{\alpha}{4R^4}.
\label{eq:hyperbolic-landau-transform}
\end{equation}
Then
\begin{equation}
 \boxed{
 \mathcal H_{\HH,B}
 =\frac{(R^2-r^2)^2\bPi^2}{8mR^4},
 }
\label{eq:hyperbolic-landau-H}
\end{equation}
which is the kinetic Hamiltonian of the constant-field Landau problem on
$\mathbb H^2_R$.  On $H_B=0$ its energy is fixed to
\begin{equation}
 \mathcal H_{\HH,B}=\frac{\alpha}{4R^4},
\label{eq:landau-energy}
\end{equation}
and the two Hamiltonian vector fields are related by
\begin{equation}
 X_{\mathcal H_{\HH,B}}
 =\frac{(R^2-r^2)^2}{4R^4}X_{H_B}
 \qquad\text{on }H_B=0.
\label{eq:landau-flow-equivalence}
\end{equation}
Thus the Newtonian zero-energy magnetic trajectories and the fixed-energy
hyperbolic Landau trajectories coincide as unparametrized curves.
\end{proposition}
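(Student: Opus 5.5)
We verify the four claims of \cref{prop:landau-realization} in order: the constant-field identity \eqref{eq:constant-hyperbolic-B}, the algebraic identity \eqref{eq:hyperbolic-landau-H}, the energy value \eqref{eq:landau-energy}, and the vector-field relation \eqref{eq:landau-flow-equivalence}.

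\textbf{The magnetic two-form.} We compare $\mathcal B$ with the hyperbolic area form $d\mu_{\HH}=\Omega^2\,dx\wedge dy$, where
\begin{equation}
 \Omega^2=\frac{4R^4}{(R^2-r^2)^2}
\end{equation}
by \eqref{eq:Omega}. Since $B(r)=-Q/(R^2-r^2)^2$ by \eqref{eq:BG}, the ratio is
\begin{equation}
 \frac{B}{\Omega^2}=-\frac{Q}{4R^4},
\end{equation}
which is constant. This gives \eqref{eq:constant-hyperbolic-B}.

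\textbf{The Landau Hamiltonian.} Multiply \eqref{eq:HB} by $(R^2-r^2)^2/(4R^4)$. The potential term becomes $-\alpha/(4R^4)$, and adding $\alpha/(4R^4)$ cancels it exactly. What remains is $(R^2-r^2)^2\bPi^2/(8mR^4)$, which is \eqref{eq:hyperbolic-landau-H}.

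We then identify this with the Landau kinetic Hamiltonian $\frac{1}{2m}g^{ij}_{\HH^2_R}\Pi_i\Pi_j$, using the inverse metric as in \eqref{eq:Hgeo}. The associated symplectic structure is the magnetic one \eqref{eq:magnetic-PB}, whose field $\mathcal B$ was just shown to be a constant multiple of $d\mu_{\HH}$. The magnetic phase space is therefore exactly the constant-field Landau problem on $\HH^2_R$, with intrinsic field $b_{\HH}$. On $H_B=0$ the first term of \eqref{eq:hyperbolic-landau-transform} vanishes, giving \eqref{eq:landau-energy}.

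\textbf{The vector fields.} We reuse the Leibniz argument from \cref{cor:inversion-flow}. This time it is applied with the magnetic Poisson bracket, which is still a derivation, so $X_{fH}=fX_H+HX_f$ holds. Take
\begin{equation}
 f=\frac{(R^2-r^2)^2}{4R^4}.
\end{equation}
The additive constant $\alpha/(4R^4)$ has zero Hamiltonian vector field. Hence
\begin{equation}
 X_{\mathcal H_{\HH,B}}=fX_{H_B}+H_B X_f,
\end{equation}
and the second term vanishes on $H_B=0$, giving \eqref{eq:landau-flow-equivalence}. The unparametrized-curve statement then follows as before. The factor $f$ is positive away from $r=R$ on both components, so $d\tau/dt=f$ is a positive reparametrization and integral curves map to integral curves.

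\textbf{Main care point.} The calculations are routine; the step needing care is the legitimacy of the identification in the second part. We must check that the Landau problem on $\HH^2_R$ with field $b_{\HH}$ is precisely the Hamiltonian $\frac{1}{2m}g^{-1}(\bPi,\bPi)$ with bracket $\{\Pi_x,\Pi_y\}=B$, in the same coordinate chart.

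This holds because the magnetic term in the symplectic form is the pullback of the two-form $\mathcal B$, which is coordinate-invariant, while the metric enters only through the kinetic term. One should also note the orientation convention behind the sign of $b_{\HH}$: positive orientation is $dx\wedge dy$. On $E_R$ the same formulas hold verbatim, with the exterior Jacobi (hyperbolic) metric from \cref{cor:exterior-punctured}.
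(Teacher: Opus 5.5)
Your proposal is correct and follows the paper's proof essentially step for step. Both use the constant ratio $B/\Omega^2=-Q/(4R^4)$ against $d\mu_{\HH}=\Omega^2\,dx\wedge dy$, then expand the definition of $\mathcal H_{\HH,B}$ so that the scalar potential cancels, and finally apply the Leibniz rule $X_{fH_B}=fX_{H_B}+H_BX_f$ for the magnetic bracket restricted to $H_B=0$. Your extra remarks only make explicit what the paper leaves implicit: the identification of the kinetic term with $\frac{1}{2m}g^{ij}_{\HH^2_R}\Pi_i\Pi_j$, and the positivity of the time-change factor.
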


\begin{proof}
Equation \eqref{eq:constant-hyperbolic-B} follows from
$d\mu_{\HH}=\Omega^2dx\wedge dy$ and
$B/\Omega^2=-Q/(4R^4)$.  Expanding
\eqref{eq:hyperbolic-landau-transform} cancels the scalar potential and gives
\eqref{eq:hyperbolic-landau-H}.  Finally, if
$a(r)=(R^2-r^2)^2/(4R^4)$, then
$X_{aH_B}=aX_{H_B}+H_BX_a$ for the magnetic Poisson bracket.  Restriction to
$H_B=0$ proves \eqref{eq:landau-flow-equivalence}.
\end{proof}

The Landau formulation also gives a coordinate-free form of the threshold.
On the shell $H_B=0$, the hyperbolic speed is
\begin{equation}
 v_{\HH}
 =\sqrt{\frac{2\mathcal H_{\HH,B}}{m}}
 =\sqrt{\frac{\alpha}{2mR^4}}.
\label{eq:hyperbolic-magnetic-speed}
\end{equation}
Since the intrinsic magnetic field is
$b_{\HH}=-Q/(4R^4)$, the magnetic Lorentz equation gives the magnitude of the
geodesic curvature as
\begin{equation}
 \boxed{
 |\kappa_g|
 =\frac{|b_{\HH}|}{m v_{\HH}}
 =\frac{|Q|}{R^2\sqrt{8m\alpha}},
 \qquad
 R|\kappa_g|
 =\frac{|Q|}{\sqrt{8m\alpha R^2}}.
 }
\label{eq:magnetic-geodesic-curvature}
\end{equation}
Thus $R|\kappa_g|<1$, $=1$, and $>1$ give respectively a hypercycle, a
horocycle, and a closed magnetic circle; the special value $Q=0$ gives
$\kappa_g=0$ and hence a Poincar\'e geodesic.  This is the intrinsic version
of the Casimir classification below
\cite{ComtetHouston,Comtet,BarrosLandau,Plyushchay2026}.

\subsection{Circle, horocycle, and hypercycle regimes}

The intrinsic geodesic curvature already predicts a transition.  The shifted
Casimir now recovers the same transition directly from the Newtonian
conserved quantities and includes the zero-field geodesic limit.

\begin{theorem}[Magnetic orbit trichotomy and geodesic limit]
\label{thm:magnetic-classification}
Consider a zero-energy magnetic trajectory in $D_R$.
For $L_z^{(B)}\neq0$, set
\begin{equation}
 \bm c=\frac{\bm K_B}{2L_z^{(B)}},
 \qquad
 \rho^2
 =|\bm c|^2-R^2-\frac{Q}{2L_z^{(B)}}.
\label{eq:magnetic-center-radius}
\end{equation}
Its supporting curve is the Euclidean circle
$|\rr-\bm c|^2=\rho^2$.  Its relation to the ideal circle $r=R$ is determined
entirely by the shifted Casimir:
\begin{equation}
 \Delta_{\partial D}
 :=4R^2|\bm c|^2-
 \left(R^2+|\bm c|^2-\rho^2\right)^2
 =\frac{4R^4}{\left(L_z^{(B)}\right)^2}\,\mathcal C_Q.
\label{eq:boundary-discriminant}
\end{equation}
Consequently,
\begin{align}
 Q^2>8m\alpha R^2
 &\quad\Longleftrightarrow\quad \mathcal C_Q<0:
 &&\text{a closed magnetic circle contained in $D_R$},
\label{eq:circle-regime}\\
 Q^2=8m\alpha R^2
 &\quad\Longleftrightarrow\quad \mathcal C_Q=0:
 &&\text{a horocycle tangent to $r=R$},
\label{eq:horocycle-regime}\\
 0<Q^2<8m\alpha R^2
 &\quad\Longleftrightarrow\quad \mathcal C_Q>0:
 &&\text{an open hypercycle meeting $r=R$ twice}.
\label{eq:hypercycle-regime}
\end{align}
At $Q=0$ the magnetic field vanishes and the subcritical family reaches its
zero-geodesic-curvature limit: the supporting curve is a Poincar\'e geodesic,
namely an orthogonal circle when $L_z^{(B)}\neq0$.
If $L_z^{(B)}=0$, the orbit equation reduces to the line
\begin{equation}
 \bm K_B\cdot\rr=\frac Q2.
\label{eq:magnetic-line}
\end{equation}
An actual interior trajectory of this type can occur only when
$\mathcal C_Q>0$.  For $Q\neq0$ it is a line-model representative of a
hypercycle, while for $Q=0$ it reduces to a diameter geodesic.
\end{theorem}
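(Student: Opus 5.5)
The plan is to reduce everything to elementary circle geometry in the Euclidean plane, using the magnetic orbit equation of \cref{prop:magnetic-orbit} and the on-shell Casimir value \eqref{eq:CQshell}. For $L_z^{(B)}\neq0$, I will divide \eqref{eq:magnetic-circle} by nothing further and simply complete the square. This gives $|\rr-\bm c|^2=|\bm c|^2-R^2-Q/(2L_z^{(B)})$ with $\bm c=\bm K_B/(2L_z^{(B)})$, which is exactly the supporting circle with the $\rho^2$ of \eqref{eq:magnetic-center-radius}. Since an actual trajectory contains more than one point, $\rho^2>0$ on any realized orbit.

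\textbf{Step 1: the discriminant identity.} Writing $L:=L_z^{(B)}$, the definition of $\rho^2$ gives $R^2+|\bm c|^2-\rho^2=2R^2+Q/(2L)$. Substituting into the definition of $\Delta_{\partial D}$ yields
\begin{equation*}
 \Delta_{\partial D}
 =\frac{1}{L^2}\Bigl[R^2\bm K_B^2-\bigl(2R^2L+\tfrac{Q}{2}\bigr)^2\Bigr]
 =\frac{4R^4}{L^2}\Bigl[\frac{\bm K_B^2}{4R^2}-\Bigl(L+\frac{Q}{4R^2}\Bigr)^2\Bigr].
\end{equation*}
By \eqref{eq:CQ} the bracket is $\mathcal C_Q$, which proves \eqref{eq:boundary-discriminant}. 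By \eqref{eq:CQshell}, on the zero-energy shell the sign of $\mathcal C_Q$ equals the sign of $8m\alpha R^2-Q^2$. This gives the left-hand equivalences in \eqref{eq:circle-regime}--\eqref{eq:hypercycle-regime}.

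\textbf{Step 2: geometric meaning of the sign.} Take two circles with center separation $d=|\bm c|$ and radii $R$ and $\rho$. A common point exists exactly when $\cos\theta=(R^2+d^2-\rho^2)/(2Rd)$ satisfies $|\cos\theta|\le1$, by the law of cosines. There are two intersection points when $\Delta_{\partial D}>0$, tangency when $\Delta_{\partial D}=0$, and no common point when $\Delta_{\partial D}<0$. (The degenerate case $d=0$ forces $\Delta_{\partial D}=-(R^2-\rho^2)^2\le0$ and fits the same scheme.) In the disjoint and tangent cases, the supporting circle contains points of the trajectory, which lie in $D_R$. It therefore cannot lie outside $D_R$ or enclose it, so it lies in $\overline{D_R}$, touching $r=R$ only in the tangent case.

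In the closed-circle case, I also need the trajectory to run around the whole circle. The circle is a compact subset of $D_R$, where the force and field are smooth. On the shell the speed is $|\bPi|/m=\sqrt{2\alpha/m}/(R^2-r^2)$, which is bounded away from zero on the circle. Hence the flow is complete and traverses the circle periodically. The names circle, horocycle and hypercycle then follow from the standard Poincar\'e-disk facts: Euclidean circles inside the disk, internally tangent to the boundary, or meeting it twice non-orthogonally are exactly these constant-curvature curves. As a consistency check, \cref{prop:landau-realization} and \eqref{eq:magnetic-geodesic-curvature} give the same threshold. For $Q=0$, $\rho^2=|\bm c|^2-R^2$ is the orthogonality condition of \cref{cor:orthogonal}.

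\textbf{Step 3: the case $L=0$.} Here \eqref{eq:magnetic-Kdotr} reduces to the line \eqref{eq:magnetic-line}. If $\bm K_B=0$ as well, this forces $Q=0$ and the diameter case of \cref{prop:radial}. Otherwise the line lies at distance $|Q|/(2|\bm K_B|)$ from the origin, so it meets $D_R$ iff $Q^2<4R^2\bm K_B^2$. With $L=0$ this condition is $\bm K_B^2/(4R^2)>Q^2/(16R^4)=\widetilde L_z^2$, that is, $\mathcal C_Q>0$. Such a line is a non-diametral chord for $Q\ne0$, hence a hypercycle, and a diameter for $Q=0$.

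The main obstacle is not the algebra of Step 1 but the containment and completeness bookkeeping in Step 2: ruling out the enclosing configuration and justifying that the closed regime gives a full periodic orbit. I would handle both by the nonemptiness of the trajectory in $D_R$ together with the uniform speed bound on compact subsets.
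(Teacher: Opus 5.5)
Your proposal is correct and follows the paper's proof essentially step for step. Both complete the square, reduce $\Delta_{\partial D}$ to $4R^4\mathcal C_Q/(L_z^{(B)})^2$ via $R^2+|\bm c|^2-\rho^2=2R^2+Q/(2L_z^{(B)})$, read off two intersections, tangency, or disjointness (with the interior trajectory forcing containment in $D_R$), and treat $L_z^{(B)}=0$ through the distance $|Q|/(2|\bm K_B|)$ of the line from the origin.

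Your extra bookkeeping (containment in the tangent case, periodic traversal of a closed circle) fills in points the paper leaves implicit. Your subcase $\bm K_B=0=L_z^{(B)}$ is in fact vacuous on the shell: it would give $\mathcal C_Q=-Q^2/(16R^4)$, which contradicts \eqref{eq:CQshell} since $\alpha>0$.
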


\begin{proof}
Completing the square in \eqref{eq:magnetic-circle} gives
\eqref{eq:magnetic-center-radius}.  The standard intersection discriminant
for the circle of center $\bm c$ and radius $\rho$ with the boundary circle of
radius $R$ is the left-hand side of \eqref{eq:boundary-discriminant}.  Using
\begin{equation}
 R^2+|\bm c|^2-\rho^2
 =2R^2+\frac{Q}{2L_z^{(B)}}
\end{equation}
and the definition \eqref{eq:CQ}, one obtains
\begin{align}
 \left(L_z^{(B)}\right)^2\Delta_{\partial D}
 &=R^2\bm K_B^2
   -\left(2R^2L_z^{(B)}+\frac Q2\right)^2\\
 &=4R^4\mathcal C_Q.
\end{align}
Positive, zero, and negative discriminant correspond respectively to two
intersections, tangency, and no intersection.  In the last case the existence
of a point of the physical orbit in $D_R$ forces the entire supporting circle
to lie inside $D_R$.  Substitution of \eqref{eq:CQshell} gives
\eqref{eq:circle-regime}--\eqref{eq:hypercycle-regime}.

When $L_z^{(B)}=0$, \eqref{eq:magnetic-Kdotr} gives
\eqref{eq:magnetic-line}.  Its distance from the origin is
$|Q|/(2|\bm K_B|)$.  The identity
\begin{equation}
 \mathcal C_Q
 =\frac{4R^2\bm K_B^2-Q^2}{16R^4}
 \qquad\left(L_z^{(B)}=0\right)
\end{equation}
shows that the line intersects the open disk only when $\mathcal C_Q>0$.
\end{proof}

The classification is the standard geometry of a constant magnetic field on
$\mathbb H^2$, but \cref{thm:magnetic-classification} provides its explicit
realization and coupling threshold in the singular off-center Newtonian
system.  Equations \eqref{eq:magnetic-geodesic-curvature} and
\eqref{eq:CQshell} show the same transition intrinsically and algebraically:
it is encoded both in the geodesic curvature and in the shifted Casimir that
closes the Newtonian magnetic symmetry.

\Cref{fig:magnetic-trichotomy} displays the three regimes at a common scale.
The plot is generated directly from \eqref{eq:CQshell} and
\eqref{eq:magnetic-center-radius}, with $R=m=1$, $\alpha=0.05$, and
$L_z^{(B)}=-0.5$.  The dotted curves are the complete Euclidean supporting
circles, while the thick portions are the physical curves lying in the open
Poincar\'e disk.  The missing boundary points emphasize that a hypercycle and
a horocycle are open trajectories in the hyperbolic geometry, whereas the
strong-field magnetic circle is genuinely closed inside the disk.

\begin{figure}[!tbp]
\centering
\includegraphics[width=1\linewidth]{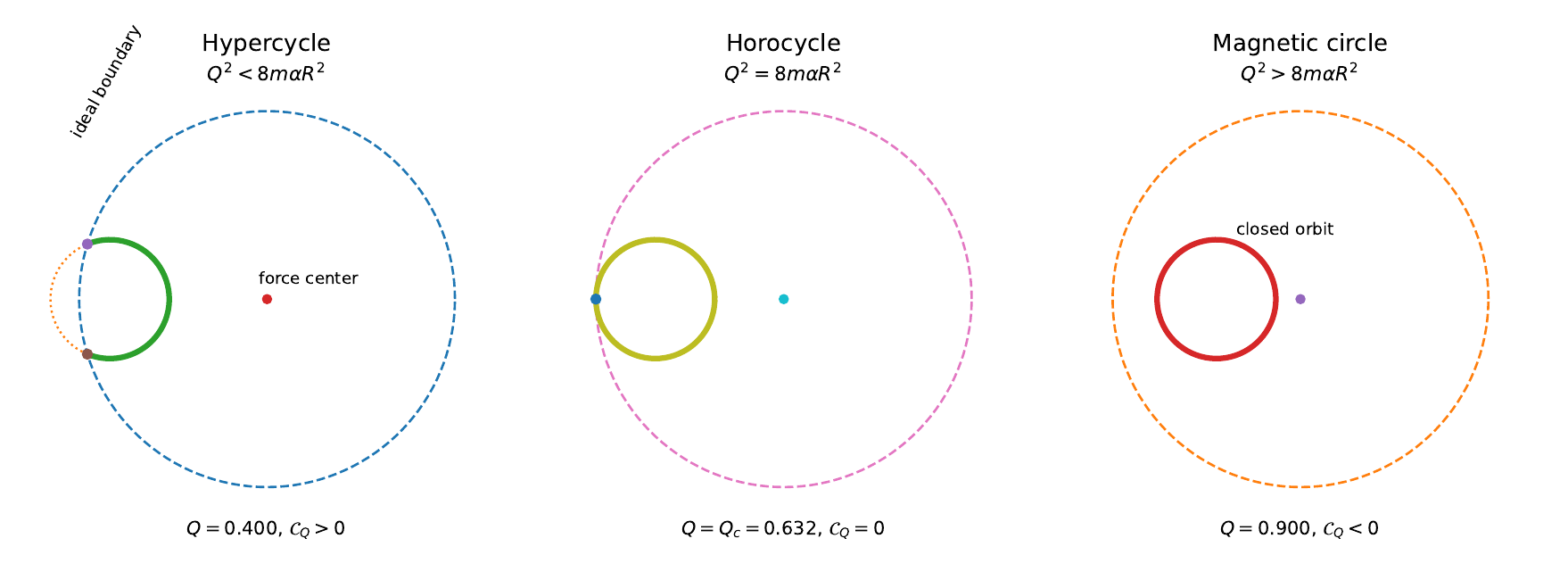}
\caption{Exact magnetic circle--horocycle--hypercycle trichotomy.  The dashed
large circles are the ideal boundaries $r=R$, dotted small circles are the
full Euclidean supporting circles, and thick curves are their physical parts
in $D_R$.  For $0<Q^2<8m\alpha R^2$ the orbit is a hypercycle with two
ideal endpoints; $Q=0$ is the geodesic limiting case; at
$Q^2=8m\alpha R^2$ it is a horocycle with one omitted tangency
point; and for $Q^2>8m\alpha R^2$ it is a closed magnetic circle wholly inside
the disk.  The plotted values are $Q=0.40$, $Q=\sqrt{0.4}$, and $Q=0.90$ for
$R=m=1$, $\alpha=0.05$, and $L_z^{(B)}=-0.5$.}
\label{fig:magnetic-trichotomy}
\end{figure}

The deformation predicted by \eqref{eq:magnetic-circle} is tested in
\cref{fig:magnetic-orbits}.  Each solid curve is obtained by integrating the
magnetic Hamilton equations, while the corresponding dashed circle is
constructed independently from the initial values of $L_z^{(B)}$ and
$\bm K_B$.  The agreement tests the orbit law rather than merely illustrating
it.

\begin{figure}[!tbp]
\centering
\includegraphics[width=0.84\linewidth]{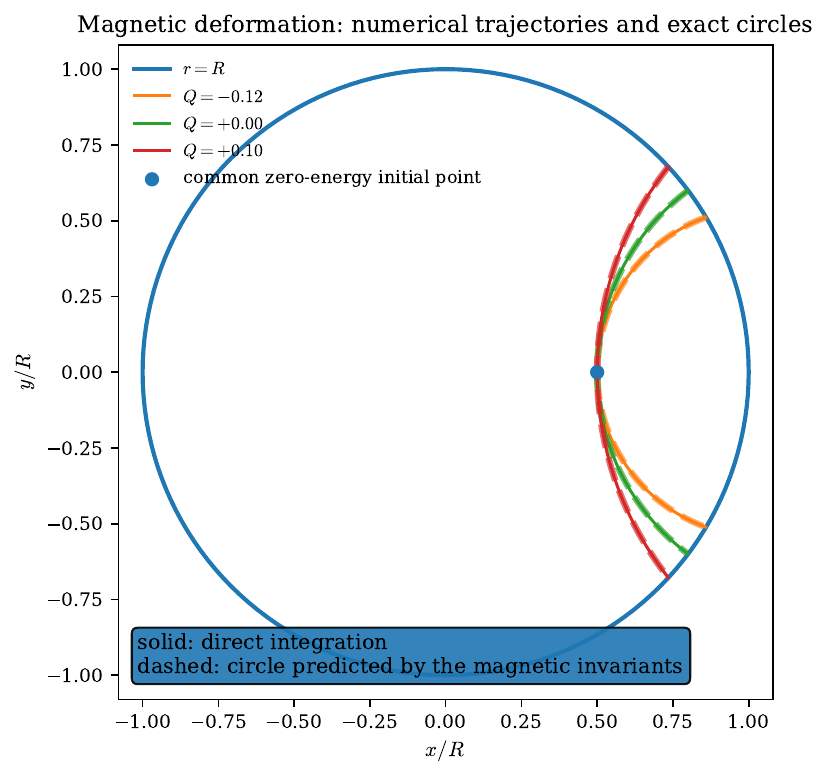}
\caption{Zero-energy magnetic trajectories for several values of $Q$.
Solid curves are direct numerical integrations and dashed curves are the
circles predicted independently by
$r^2-(\bm K_B/L_z^{(B)})\cdot\rr+R^2+Q/(2L_z^{(B)})=0$.
The integrations share the same zero-energy initial point.  The displayed
nonzero values lie in the hypercycle regime, while $Q=0$ is the geodesic
limiting case; stronger fields cross the horocycle threshold and produce
closed magnetic circles as stated in \cref{thm:magnetic-classification}.}
\label{fig:magnetic-orbits}
\end{figure}

Because the disk is contractible, there is no topological Dirac-monopole
charge on $D_R$ itself.  The term ``monopole'' is therefore best understood as
the analytic continuation of the spherical construction in
\cite{BhandariCrescimanno}, whereas the intrinsic hyperbolic description is a
constant Landau field.

\section{Numerical checks of the exact formulas}
\label{sec:numerics}

The proofs above are exact and do not depend on computation.  Numerical
integration nevertheless provides a useful independent audit: the orbit
circles can be reconstructed from the initial conserved quantities and then
compared with trajectories obtained from the differential equations.  We
perform this check for both the nonmagnetic and magnetic systems.

For the nonmagnetic problem,  For the nonmagnetic problem,
\begin{equation}
 \dot\rr=\frac{\pp}{m},
 \qquad
 \dot\pp=\frac{4\alpha}{(R^2-r^2)^3}\rr.
\label{eq:eom}
\end{equation}
We use the dimensionless parameter choice
\begin{equation}
 R=m=1,
 \qquad
 \alpha=0.05,
 \qquad
 \rr(0)=(0.5,0),
 \qquad
 \pp(0)=
 \left(0,\frac{\sqrt{2m\alpha}}{R^2-0.5^2}\right),
\label{eq:numerical-initial-data}
\end{equation}
which satisfies $H(0)=0$ up to floating-point roundoff.  The integration is
terminated at $r=R-\varepsilon$ with $\varepsilon=2\times10^{-3}$, before the
singular acceleration is reached.  We use the adaptive eighth-order DOP853
integrator through SciPy, with relative and absolute tolerances $10^{-11}$ and
$10^{-13}$, respectively \cite{Virtanen}.  The figures are rendered with
Matplotlib \cite{Hunter}.

The first test compares the integrated trajectory with the circle inferred
solely from the initial conserved quantities,
\begin{equation}
 \bm a=\frac{\bm K(0)}{2L_z(0)},
 \qquad
 F_{\mathrm{orb}}(t)
 =r(t)^2-2\bm a\cdot\rr(t)+R^2.
\label{eq:orbit-residual}
\end{equation}
As shown in \cref{fig:numerical-orbit}, the directly integrated branch is
visually indistinguishable from the analytical circle.  For the run in
\eqref{eq:numerical-initial-data},
\begin{equation}
 \max_t\frac{|F_{\mathrm{orb}}(t)|}{R^2}
 =7.48\times10^{-14}.
\label{eq:orbit-residual-value}
\end{equation}

\begin{figure}[!tbp]
\centering
\includegraphics[width=0.82\linewidth]{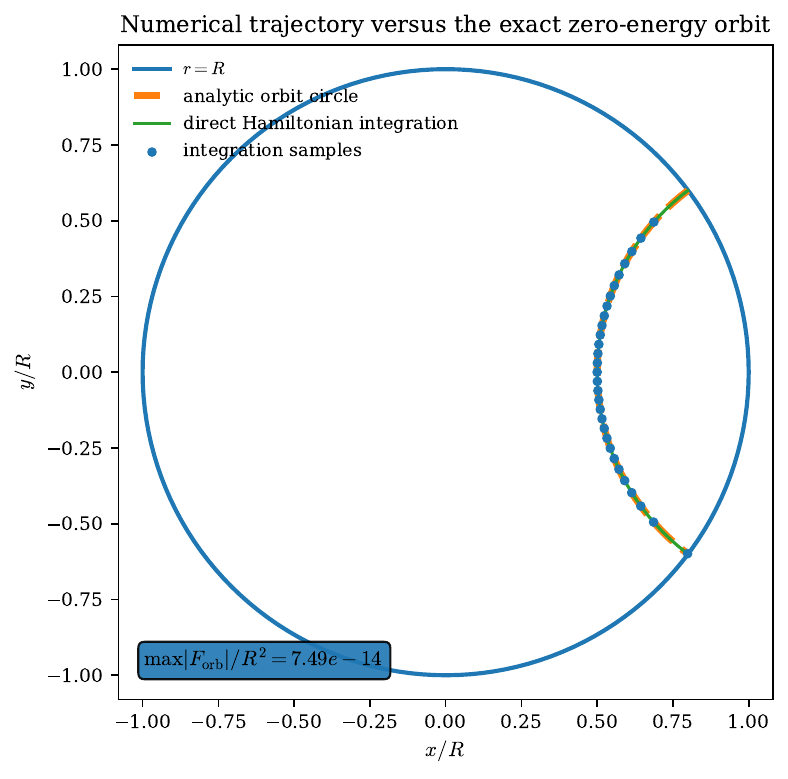}
\caption{Direct integration of Hamilton's equations at zero energy compared
with the orbit circle predicted by the conserved quantities.  The plotted
residual is $F_{\mathrm{orb}}$ from \eqref{eq:orbit-residual}.  The numerical
trajectory and analytical circle overlap over the full interior branch up to
the singular-boundary cutoff.}
\label{fig:numerical-orbit}
\end{figure}

A more stringent test monitors the conserved quantities and orbit equation,
\begin{align}
 \varepsilon_H(t)&=H(t)-H(0),\\
 \varepsilon_L(t)&=L_z(t)-L_z(0),\\
 \bm\varepsilon_K(t)&=\bm K(t)-\bm K(0),\\
 \varepsilon_{\mathrm{orb}}(t)&=F_{\mathrm{orb}}(t).
\label{eq:numerical-residuals}
\end{align}
The dimensionless residuals are shown in \cref{fig:conservation}.  Their maxima
before the cutoff are
\begin{align}
 \max_t\frac{|\varepsilon_H|}{\alpha/R^4}
 &=6.38\times10^{-9},\\
 \max_t\frac{|\varepsilon_L|}{|L_z(0)|}
 &=4.65\times10^{-13},\\
 \max_t\frac{\|\bm\varepsilon_K\|}{\|\bm K(0)\|}
 &=2.18\times10^{-12},\\
 \max_t\frac{|\varepsilon_{\mathrm{orb}}|}{R^2}
 &=7.48\times10^{-14}.
\label{eq:numerical-maxima}
\end{align}
The larger energy residual near the endpoint reflects the rapidly growing
force as $r\to R$; the symmetry and orbit residuals remain near machine
precision over most of the integration.

\begin{figure}[!tbp]
\centering
\includegraphics[width=0.92\linewidth]{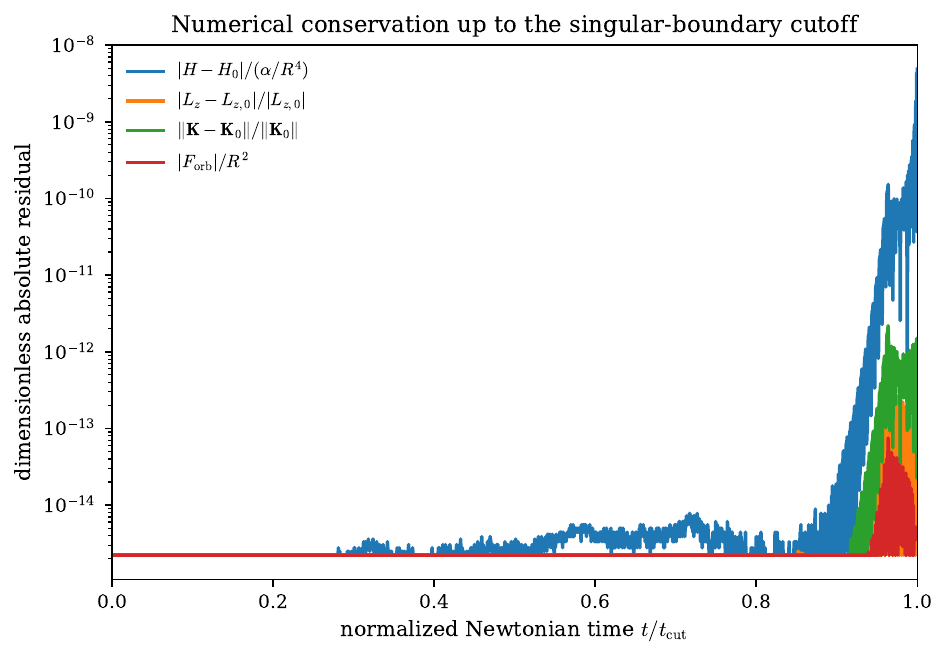}
\caption{Dimensionless numerical residuals for the Hamiltonian, angular
momentum, Runge--Lenz vector, and exact orbit equation.  The integration is
terminated at $r=R-\varepsilon$ before the singular acceleration is reached.}
\label{fig:conservation}
\end{figure}

For the magnetic integrations we use the same values
$R=m=1$, $\alpha=0.05$, the same initial point $\rr(0)=(0.5,0)$, and the same
initial covariant momentum
\begin{equation}
 \bPi(0)=
 \left(0,\frac{\sqrt{2m\alpha}}{R^2-0.5^2}\right).
\label{eq:magnetic-numerical-data}
\end{equation}
Thus $H_B(0)=0$ independently of $Q$.  With the bracket convention
\eqref{eq:magnetic-PB}, the integrated equations are
\begin{align}
 \dot x&=\frac{\Pi_x}{m},
 &\dot y&=\frac{\Pi_y}{m},\\
 \dot\Pi_x&=\frac{4\alpha x}{(R^2-r^2)^3}
             +\frac{B(r)}{m}\Pi_y,
 &\dot\Pi_y&=\frac{4\alpha y}{(R^2-r^2)^3}
             -\frac{B(r)}{m}\Pi_x.
\label{eq:magnetic-eom}
\end{align}
We use the same DOP853 tolerances and the same cutoff
$r=R-2\times10^{-3}$ as in the nonmagnetic runs.

For $Q=-0.12,0,0.10$, the maximum absolute residual of the magnetic circle
equation remained below $3.3\times10^{-10}$, while $L_z^{(B)}$ and
$\bm K_B$ were conserved to better than $7.0\times10^{-11}$ in absolute norm
for the nonzero-$Q$ runs.  The nonzero values satisfy
$0<Q^2<8m\alpha R^2$ and therefore lie in the hypercycle regime predicted by
\cref{thm:magnetic-classification}; the $Q=0$ run is its geodesic limiting
case.  The checks support
\cref{thm:magnetic-algebra,prop:magnetic-orbit,thm:magnetic-classification}.

\section{Discussion and conclusion}

The motivating question was simple: what becomes of Newton's off-center
circle problem in the hyperbolic branch?  The answer is now complete for the
zero-energy system \eqref{eq:H}.  The trajectories are Poincar\'e geodesics,
but that statement is only the starting point.  The full solution includes
the exterior component, the radial sector, the phase-space symmetry, the
inversion map, the singular-time asymptotics, and the magnetic deformation.

The central geometric result is the exact relation
\begin{equation*}
 |\rr_c|^2=R^2+\rho^2.
\end{equation*}
It simultaneously proves that every nonradial supporting circle is
orthogonal to $r=R$ and that the force center lies outside the circle.  This
is the hyperbolic counterpart of the enclosing geometry in the spherical
off-center problem.  The result is not obtained by integrating a special
orbit: it follows from a conserved $\mathfrak{so}(2,1)$ moment map whose
Casimir is the hyperbolic geodesic Hamiltonian.  In this sense, the displaced
circles are the visible traces of a hidden noncompact symmetry.

Circular inversion supplies the global part of the classical picture.  Its
cotangent lift is symplectic, preserves $L_z$ and $\bm K$, and intertwines the
zero-energy flow between the exterior region and the punctured disk after a
positive time change.  The puncture cannot be ignored: exterior radial escape
corresponds to the omitted origin and has finite Jacobi length but infinite
Newtonian duration.  At the other end, the singular circle has infinite
Jacobi distance but is reached in finite Newtonian time.  Hyperbolic
completeness therefore does not regularize the Newtonian collision, and
inversion relates partner branches without prescribing a physical crossing
of the singular circle.

The magnetic extension shows why the model is more than an isolated orbit
calculation.  The symmetry-preserving radial field is exactly constant with
respect to hyperbolic area, so the zero-energy Newtonian trajectories are
fixed-energy hyperbolic Landau trajectories.  The familiar Landau trichotomy
is not claimed as new; the new point is the complete Newtonian realization
and coupling dictionary.  The shifted Casimir gives
\begin{equation*}
 Q^2>8m\alpha R^2,\qquad
 Q^2=8m\alpha R^2,\qquad
 0<Q^2<8m\alpha R^2
\end{equation*}
for magnetic circles, horocycles, and hypercycles, respectively, with
$Q=0$ recovering the geodesic family.  Circular inversion survives as an
orientation-reversing duality that exchanges $Q$ and $-Q$ while preserving
the shifted moment map and Casimir.

The quantum discussion was included to prevent two superficially similar
claims from being conflated.  The hyperbolic Helmholtz equation is related to
the naive flat singular equation by a St\"ackel transform, in which the
hyperbolic eigenvalue becomes the flat coupling.  Genuine unitary equivalence
instead produces the derivative-containing operator
\eqref{eq:AE-expanded} on Euclidean measure.  Within the coupling transform,
the bottom of the hyperbolic continuum maps exactly to the
Hardy/oscillation threshold of the inverse-square boundary model.  This is a
spectral consequence of the same geometry, not a claim that the naive flat
Hamiltonian is unitarily identical to the hyperbolic Laplacian.

Taken together, the results convert an anticipated geometric analogy into a
complete dynamical model.  The significance of the paper is therefore not
novelty of hyperbolic geodesics, inversion, or Landau motion separately.  It
is the proof that the singular off-center potential organizes all three
through one exact $SO(2,1)$ structure, with explicit orbit equations,
interior--exterior duality, boundary asymptotics, magnetic classification,
and operator-theoretic scope.  A natural next step is the quantum magnetic
problem: constructing self-adjoint ordered generators on the transported
hyperbolic Hilbert space and comparing their representation-theoretic
Casimir with the Landau spectrum.  That extension requires a separate domain
and ordering analysis and is left for future work.

\section*{Generative AI disclosure}

OpenAI's ChatGPT, using the GPT-5.5 Thinking and GPT-5.6 Sol models, was used as an auxiliary research-assistance tool for exploratory calculations, testing candidate approaches, computational implementation, literature organization, consistency checks, and manuscript refinement. The original research ideas, questions, and initial scientific direction of this work were conceived and initiated by the author. Subsequent interaction with the AI models assisted in exploring additional directions, developing and testing possible extensions, and examining alternative approaches arising during the course of the research. All AI-generated outputs were treated as provisional and independently verified by the author. The identification and interpretation of the principal results, assessment of novelty and scientific significance, and final responsibility for the manuscript remain solely with the author.

\FloatBarrier
\appendix
\section{Useful algebraic identities}

For reference, the following identities hold identically on phase space:
\begin{align}
 \bm K^2
 &=4R^2L_z^2+(R^2-r^2)^2\pp^2,
\label{eq:appendix-K2}\\
 \bm K\cdot\rr
 &=L_z(r^2+R^2),
\label{eq:appendix-Kr}\\
 \{K_x,H\}&=-4yH,
 \qquad
 \{K_y,H\}=4xH.
\label{eq:appendix-KH}
\end{align}
The first two make the Casimir and orbit-circle derivations immediate.

For the magnetic system,
\begin{align}
 \bm K_B\cdot\rr
 &=L_z^{(B)}(r^2+R^2)+\frac{Q}{2},\\
 \frac{\bm K_B^2}{4R^2}
 -\left(L_z^{(B)}+\frac{Q}{4R^2}\right)^2
 &=\frac{m(R^2-r^2)^2}{2R^2}H_B
 +\frac{m\alpha}{2R^2}
 -\frac{Q^2}{16R^4}.
\end{align}
For $L_z^{(B)}\neq0$, the supporting-circle intersection discriminant is
\begin{equation}
 4R^2|\bm c|^2-
 \left(R^2+|\bm c|^2-\rho^2\right)^2
 =\frac{4R^4}{\left(L_z^{(B)}\right)^2}\mathcal C_Q,
\end{equation}
which gives the magnetic circle--horocycle--hypercycle classification,
including the zero-field geodesic limit, without solving the equations of
motion.

For the charge-reversing inversion theorem, if
$\ell=x\Pi_y-y\Pi_x$ and $s=\rr\cdot\bPi$, then
\begin{align}
 \ell'&=\ell,
 &s'&=-s,
 &G_Q(r')&=\frac{r^2}{R^2}G_{-Q}(r),\\
 \widetilde L_{z,Q}(\rr',\bPi')
 &=\widetilde L_{z,-Q}(\rr,\bPi),
 &\bm K_{B,Q}(\rr',\bPi')
 &=\bm K_{B,-Q}(\rr,\bPi).
\end{align}
These identities give
$\mathcal C_Q\circ\Phi=\mathcal C_{-Q}$ immediately and make the
$Q\leftrightarrow -Q$ duality directly checkable at the level of the moment
map.

\FloatBarrier

\end{document}